\theoremstyle{thmstyleone}%
\newtheorem{theorem}{Theorem}
\newtheorem{proposition}{Proposition}%
\theoremstyle{thmstyletwo}%
\newtheorem{example}{Example}%
\newtheorem{lemma}{Lemma}%
\theoremstyle{thmstylethree}%
\begin{document}

\title[Article Title]{A Stackelberg Game based on the Secretary Problem: Optimal Response is History Dependent}


\author*[1]{\fnm{David M.} \sur{Ramsey}}\email{david.ramsey@pwr.edu.pl}
\equalcont{ORCID number: 0000-0002-7186-1436}


\affil*[1]{\orgdiv{Department of Computer Science and Systems Engineering}, \orgname{Wroc\l aw University of Science and Technology}, \orgaddress{\street{Wybrze\.{z}e Wyspia\'{n}skiego 27}, \city{Wroc\l aw}, \postcode{50-370}, \country{Poland}}}



\abstract{This article considers a problem arising from a two-player game based on the classical secretary problem. First, Player 1 selects one object from a sequence as in the secretary problem. All of the other objects are then presented to Player 2 in the same order as in the original sequence. The goal of both players is to select the best object. The optimal response of Player 2 is adapted to the optimal strategy in the secretary problem. This means that when Player 2 observes an object that is the best seen so far, it can be inferred whether Player 1 selected one of the earlier objects in the original sequence. However, Player 2 cannot compare the current object with the one selected by Player 1. Hence, this game defines an auxiliary problem in which Player 2 has incomplete information on the relative rank of an object. It is shown that the optimal strategy of Player 2 is based on both the number of objects to have appeared and the probability that the 
current object is better than the object chosen by Player 1 (if Player 1 chose an earlier object in the sequence). However, this probability is dependent on the previously observed objects. 
A lower bound on the optimal expected reward in the auxiliary problem is defined by limiting the memory of Player 2. An upper bound is derived by giving Player 2 additional information at appropriate times. The methods used illustrate approaches that can be used to approximate the optimal reward in a stopping problem when there is incomplete information on the ranks of objects and/or the optimal strategy is history dependent, as in the Robbins' problem.}

\keywords{secretary problem, Robbins' problem, Stackelberg game, history dependency}


\pacs[MSC Classification]{60G40, 91A20, 91A27, 49N30, 05A05}

\maketitle

\section{Introduction} \label{sec1}

 In the classical secretary problem (see Gilbert and Mosteller 1966) a decision maker (DM) 
observes a sequence of $N$ objects, which appear in random order (i.e. each permutation of the $N$ objects is assumed to be equally likely) and whose values can be linearly ranked relative to previously observed objects. It is assumed that $N$ is known. The DM can accept one object and only at the time it appears. The goal of the DM is to choose the most valuable object.  The optimal strategy is of the following form: automatically reject the first $n^{\ast}-1$ objects and then accept the next object which is relatively best, i.e. better than previously observed objects (as long as such an object appears). The threshold $n^{\ast}$ is the smallest integer $n$ such that the expected reward from accepting a relatively best object at moment $n$ is at least as great as the expected reward from continued search.  The 
asymptotic probability of choosing the best object is $e^{-1}$ and $n^{\ast}$ tends to $Ne^{-1}$. 

For the purposes of this paper, we will use the following characterization of a secretary problem as given by Ferguson (1989). There are $N$ applicants for a position (where $N$ is known). These applications appear sequentially in a random order. The applicants can be ranked according to a linear order. A DM can only employ one applicant and the decision to employ an applicant can only be made on the appearance of the candidate and on the basis of the rank of the applicant with respect to the applicants who have already appeared. The goal of a DM is to obtain the best of all the applicants and thus one can assume that a DM's 
payoff is equal to one when he/she employs the best applicant, otherwise his/her payoff is equal to zero. Bruss (2000) embeds such problems in a class of problems where the DM 
aims to stop a sequence on the occurrence of the last "success". According to this interpretation, a success at a given moment corresponds to the appearance of an object that is relatively best (i.e. the best seen so far). According to such an interpretation, the absolutely best object from a sequence of $N$ is the last relatively best object to appear. These results were generalized by Grau Ribas (2020), who assumed that the payoff from 
correctly predicting the last success could depend on the moment at which the DM stops a process. Bay\'{o}n \textit{et al.} (2023) consider a general approach to finding asymptotic solutions to similar problems by deriving and solving an appropriate differential equation.

In other variants of best choice problems, the information on the set of applicants may differ from a simple linear ordering.
In the full information best choice
problem, the DM observes the value of each object when it appears. 
When these values come from a continuous distribution, it is clear that a DM can rank the 
objects that have appeared according to a linear ordering and thus a DM has a richer source of information. 
Given that a single DM knows the distribution of the values of offers, then the asymptotic probability of choosing the most valuable object under the 
optimal policy is approximately 0.580164, independently of this distribution (see Samuels 1991; Gnedin 1996). Gnedin and Miretskiy (2007) give an explicit formula for this optimal value. Gnedin \textit{et al.} (2023) generalize this problem to one in which the 
distribution of the values of the objects is not necessarily continuous (with the condition that when the number 
of objects tends to infinity, then the mass of probability at any discrete point tends to zero).
Skarupski (2019) considers a similar problem in which the DM can pay for information on whether a current candidate is the best of all objects. Kubicka \textit{et al.} (2023) consider a problem in which the DM only observes the relative rank of each object, but the goal is to accept an object whose value is above the $(1-1/N)$-quantile of the distribution of values. In this case, the optimal strategy is to accept an object at moment $i$ if and only if its relative rank is below a threshold that is a non-decreasing function of $i$. Skarupski (2020) considers a problem in which there can be errors in observing the relative rank of an object. It is assumed that the observed relative rank of the current object is never greater than its actual rank. The optimal strategy in such a problem is to accept the first object that 
has an observed relative rank of one (a candidate) after a given time threshold. In such a problem, the DM generally starts accepting candidates at a later time than in the classical secretary problem, but has a smaller probability of accepting the absolutely best object. 

 In other variants of best choice problems, the graphical 
representation of the ordering of all the offers may differ from a linear ordering. In this case, some objects may be incomparable and/or there may be multiple best objects (see e.g. 
Garrod and Morris 2013). 
Bearden (2006) considers a similar problem that is not a best choice problem, but the goal of the DM is to maximize the expected value of the object accepted while only accepting an offer that is the best seen so far. For large $N$, the optimal 
strategy in this problem is to automatically reject the first $\sqrt{N}$ of the offers and then 
accept the next offer that is the best that has been seen so far. Ferenstein and 
Krasnosielska (2010) consider a similar problem in which objects appear according to
a Poisson process.

Secretary problems form a subset of best choice problems in which the goal of a DM is to obtain the absolutely best applicant. A significant amount of work has been directed at 
problems in which the number of applicants is random. Presman and Sonin (1973) consider a best choice problem in which the number of applicants comes from a known distribution. 
The form of the optimal strategy in such problems is not necessarily given by a time threshold. For example, suppose that the number of applicants is either 2 or $N$, where $N$ is a large integer, each with probability 0.5. In this case, if the second object is better than the first object, it is optimal to accept it, since it is likely that no more objects will appear. 
However, if a third object appears, then the DM knows that there will be in total $N$ 
objects and it is optimal to automatically reject such an object. One particular example that 
they consider assumes that the number of objects has a binomial distribution with parameters $N$ and $p$. For large $N$, under the optimal 
strategy the DM should reject approximately the first $e^{-1}Np$ offers and then accept the 
next offer that is relatively the best. As $N\rightarrow \infty$, the optimal probability of accepting the best of all offers tends to $e^{-1}$, i.e. in the limit as $N\rightarrow \infty$, this problem is equivalent to the secretary problem in which the number of offers is equal to the expected number of offers, $Np$. These results were extended by Petruccelli (1983).
Cowan and Zabczyk (1979) adapt the classical secretary problem by assuming that objects appear according to a Poisson process of known intensity $\lambda$ that occurs over the time interval $[0,T]$. The optimal strategy of the DM is of the form: accept the first relatively best object to appear after some time $t^{\ast}$. As $T\rightarrow \infty$, $t^{\ast}\rightarrow e^{-1}T$ and the 
probability of accepting the best object overall tends to $e^{-1}$. Hence, in the 
limit the Cowan-Zabczyk problem is also essentially the same as the classical secretary problem with a large number of applicants. Bruss (1984, 1987), as well as Bruss and Samuels (1987, 1990), generalize the results described immediately above to describe the robustness 
of the $e^{-1}$ threshold rule to a wide range of scenarios. R\"{u}schendorf (2016) 
derives asymptotic solutions to a range of optimal stopping problems by assuming that 
objects appear according to a Poisson process. Porosi\'{n}ski (1987) presents a model of 
the full information best choice problem when the number of objects to appear comes from a random distribution. Chudjakow and Riedel (2013) consider generalisations of the secretary problem in which, for example, the order in which candidates appear is not random.

A similar class of problems (to ones in which a random number of applicants appear) is formed by 
problems in which the number of applicants is known, but an applicant may refuse an offer of
employment from the DM. Smith (1975) considers a variant of the secretary problem in which each applicant would 
accept an offer with probability $p$, independently of the overall rank of the applicant. 
If an applicant refuses an offer, then the DM can continue searching. The goal of the 
DM is to employ the best of all the applicants. Tamaki (1991) argues that a more 
reasonable goal of the DM would be to employ the best of all the available applicants. 
He considered two models. According to one of these models, the DM can ascertain the 
availability of an applicant before a decision is made. According to the other model, the 
availability of an applicant can only be ascertaned by making an offer of employment.
Ano \textit{et al.} (1996) extends these results by assuming that the DM can make a maximum number of offers. Tamaki (2009) considers the full information version of such a 
problem, in which the DM observes the value of each successive applicant from a known 
distribution.  Ramsey (2016) considers two analogous problems in which each of $N$ applicants is 
available with probability $p$. However, according to these models, when an applicant is 
unavailable, then the DM cannot compare the value of this applicant with the value of any 
other applicant. According to one model, the goal of the DM is to employ the best of all the 
applicants. According to the other model, the goal of the DM is to, more realistically, 
employ the best of all the available candidates. The optimal strategies in these problems are 
of a similar form. Any decision on whether to accept a relatively best object depends not only 
on the number of objects that are yet to appear, but also on the number of objects that have so far turned out to be available (or, equivalently, unavailable). As $N\rightarrow \infty$, this 
dependency on the number (proportion) of previous objects that turned out to be available 
disappears. Hence, the decision on whether to accept a relatively best object only depends on the proportion of objects to have appeared so far. In fact, this optimal strategy tends to the asymptotically optimal strategy in the secretary problem, i.e. accept the first relatively 
best object to appear after a proportion $e^{-1}$ of the objects have already appeared. The 
probability of obtaining the best of the available objects is $e^{-1}$. The probability of 
obtaining the best of all the objects is $pe^{-1}$. For finite $N$, the optimal probability of obtaining 
the best of all the available objects is bounded below by the optimal probability of obtaining the best object when the number of objects comes from the binomial distribution with 
parameters $N$ and $p$ (see Presman and Sonin 1973). This is due to the fact that 
one can interpret these two models as essentially assuming that there are $N$ applicants placed in sequence and each is available with probabilty $p$. According to Ramsey's model, when an applicant is unavailable, then the 
DM can note this fact. However, according to the model of Presman and Sonin, when an 
applicant is unavailable, the DM cannot note this fact.

For the purposes of this paper, a strategy will be said to be 
history independent when the decision on whether to accept an object depends 
only on its rank relative to the objects that have already appeared, the value of such an 
object (if this value is observed) and the number/proportion of objects that have already appeared. Strategies that 
do not satisfy this condition are called history dependent. In the case of a secretary problem, a strategy is history dependent if and only if the decision on whether to accept 
a relatively best object only depends on the number/proportion of objects that have already appeared. Time threshold strategies in secretary problems form a particular class of history independent strategies under which a 
relatively best object is accepted if and only if a specified number of objects have 
already appeared (in the limiting case as $N\rightarrow \infty$, when a specified proportion of objects have already appeared).  The solution to a sequential decision problem is called 
asymptotically history independent if there exists an asymptotically optimal solution (as 
$N\rightarrow \infty$) that is history independent. 
For example, as described above, the optimal solutions to the problems considered by Ramsey (2016) are history dependent. However, asymptotically, these solutions are history independent.

Samuel-Cahn (1995) considers an adaptation of the secretary problem in which there are a fixed number of applicants, but there is a freeze on hiring after a random number $M$ of applicants have appeared. The full information version of this problem, in which the 
DM observes the values of applicants, is considered in Samuel-Cahn (1996).

Best choice problems belong to a class of problems in which the payoff of a DM depends on the rank of the object chosen. In the so called postdoc problem, the goal of the DM is to 
select the second best object from a sequence of $N$. This problem was independently solved by Szajowski (1982) and Vanderbei (1983). Liu and Milenkovic (2022) consider a similar problem in which the permutation corresponding to a sequence of the $N$ objects is not chosen from a uniform distribution. Bay\'{o}n \textit{et al.} (2018) show similarities between this problem and the problem of accepting either the best of the worst object. In both cases the first 50\% of objects should be automatically rejected.  Both Szajowski (2009) and Goldenshluger \textit{et al.} (2020) consider a class of problems in which a DM's payoff depends on the rank of the object selected.

Chow \textit{et al.} (1964) consider the problem of 
minimizing the expected rank of the object chosen when objects can be assessed according 
to a linear ranking. They show that as $N\rightarrow \infty$ the expected rank of the object selected under the optimal policy tends to $w$, where $w\approx 3.8695$.
In the Robbins' problem, the DM has the same goal, but observes the values of objects from a known distribution. This problem has been studied for a very long time, since it is technically very difficult (for the background to the problem and a number of results, see Bruss 2005). At moment $n$, the 
DM should accept the current object if its expected absolute rank is smaller than the expected rank of 
the object obtained from continued search using the optimal strategy. The expected 
absolute rank of the current object only depends on its value and relative rank compared to previous 
observations. However, the expected rank of an object obtained from continued search 
depends on the values of all the objects seen so far. Hence, the Robbins' problem can be said to be fully history dependent (see Bruss 2005). Bruss and Ferguson (1993) consider the threshold rule based on the value of the current object that maximizes the value of the object chosen (see Moser 1956). Asymptotically, the 
expected rank of the object chosen under this strategy is 7/3. Although Moser's rule works relatively well, Bruss and Ferguson show that the optimal expected reward under such a memoryless strategy is approximately 
2.3266. They show that the optimal expected rank under any strategy is asymptotically bounded below by 1.908. 
Using a similar approach, Assaf and Samuel-Cahn (1996) 
show that the optimal expected value of the rank under strategies of this form lies in the interval $(2.295,2.327)$.  Bruss and Ferguson (1996) prove that the optimal strategy is history dependent. This is done by considering the optimal strategy of a so called 
"half prophet". At any given decision point, such a half prophet has the same information as 
the DM in the standard Robbins' problem. However, given that the half prophet rejects the current object, at the next decision point he can see the entire future of the process.
Gnedin (2007) considers a generalized form of the Robbins' problem in which the cost incurred by the decision maker is assumed to be an increasing function of the rank of the object taken and show that the asymptotic form of the optimal strategy is history dependent.
Bruss and Swan (2009) consider a variant of Robbins' problem in which objects appear as a 
Poisson process at rate 1. They show that the asymptotic value function in this problem corresponds to the asymptotic value of the discrete time problem and present a differential equation that this value function must satisfy. 

More recently, Dendievel and Swan (2016) give an explicit solution of Robbins' problem for $n=4$. Allaart and Allen (2019) consider a variant of the Robbins' problem in which the observations conform to a random walk, rather than a set of independent and identically distributed random variables. By considering a strategy that depends both on the relative rank and value of the current candidate, Meier and S\"{o}gner (2017) improve the upper bound on 
the asymptotically optimal expected rank to approximately 2.32614.

The problem considered here can be interpreted as a two-player version of the secretary problem in which one of the players always has priority in deciding whether to accept or reject an object. The players jointly observe a sequence of offers and each player can obtain at most one object. The first such model was presented by Dynkin (1969), who considered a two-player game in which DM1 and DM2 could accept objects that appeared at odd and even moments, respectively. Later game-theoretical models of the secretary problem can essentially be split into two streams. In the first stream (offline choice), players choose a time threshold at the beginning of the game. Each player accepts the first object of relative rank 1 to appear after his/her time threshold. If only one player wishes to accept an object, then he/she obtains that object. However, if the players wish to accept the same object, then this object is assigned to one of the players at random and the game stops. Sakaguchi (1980) considers such a model in which the goal of each player is to obtain the most valuable object. One interesting aspect of such a model is that at equilibrium players try to avoid choosing the same object by choosing significantly different thresholds. Fushimi (1981) considers a similar model and also considers a variant in which a player who does not obtain an accepted object is free to choose any object that appears at a later time.  In the second stream (online choice), players choose whether to accept or reject an object whenever one appears. Once one player has accepted an object, the other is free to continue searching. Enns and Ferenstein (1987) consider a model in which two players simultaneously observe a sequence of objects. If both players wish to accept the same object, then Player 1 has priority. Whenever a player accepts an object, then the other is free to accept an object that appears later. Szajowski (1994) extends this model to one in which whenever both players wish to accept an object, then priority is assigned to one of the players with a given probability. In such games it is often the case that a player would prefer the other player to accept an object so that he/she could carry on searching alone, but in the worst scenario neither player accepts that object. In such scenarios, players should consider randomized strategies (see 
Neumann \textit{et al.} 2002). However, in such games it would be generally better when players coordinated their strategies rather than independently randomizing (see Ramsey and Szajowski 2008).  Szajowski (2007) considers a similar problem in which objects appear according to a Poisson process.  A more general framework for such games in which the players do not know the probability with which they have priority is considered 
by Szajowski and Skarupski (2019). Fabien \textit{et al.} (2023) consider a model in which the players can at any moment recall one of the previous candidates to appear. Bei and Zhang (2022) consider a game in which the players do not observe the candidates in the same order. They derive equilibria for both the online and offline versions of such a game. 

The game considered in this article is most similar to the following three models of sequential
games. Ramsey (2007) considers a problem in which Player 1 always has priority and when he/she accepts (and thus obtains) an object, then Player 2 does not observe that object. This leads to an auxiliary optimization problem in which the optimal response of Player 2 should be adapted to the incomplete information he/she has and the strategy of Player 1. Jacobovic (2022) gives a number of results for a similar model. Skarupski and Szajowski (2023) present a model in which the player with priority observes the values of the objects, while the other player only observes the relative ranks. Based on the decisions of the first player, at equilirium the second player obtains some information about the value of each object that has appeared.

The following two models are also in many ways similar to the one considered here. Kuchta and Morayne (2014) considered an extension of the secretary problem in which one DM can observe a set of $n$ sequences, each with $N$ objects. It is assumed that all the objects in the $i$-th sequence are observed in random order before the objects in the $(i+1)$-th sequence can be observed. The DM may only accept one object and only at the time it appears. The goal of the 
DM is to choose an object that is the best in its sequence. Intuitively, the larger the number of sequences, the more likely it is that the DM achieves his/her goal. The optimal strategy can be defined by a set of non-increasing threshold times $(t_1 ,t_2, \ldots ,t_n)$. When observing the $i$-th sequence, the DM automatically rejects the first $t_i$ objects to appear and then accepts the next relatively best object in the $i$-th sequence or passes on to the next sequence if no such object appears. When $n=1$, the problem reduces to the classical secretary problem. Asymptotically, for large $N$, the threshold
$t_i$ (understand as the proportion of offers in the $i$-th sequence that are automatically rejected) is the probability of eventually picking an object that is the best in its sequence when the DM starts observing the $i$-th sequence. Kuchta (2017) considers the corresponding extension of the full information best choice problem.

This article considers a problem which is in some ways a mirror image of the problem solved by Kuchta and Morayne (2014). Assume that $n$ DMs observe a single sequence of $N$ offers. The goal of each of the DMs is to choose the best of all the objects. The first DM
 (DM1) observes the sequence and can choose an object only when it appears. Hence, the problem 
faced by the DM1 is the classical secretary problem. The second DM (DM2) then observes the sequence with the objects appearing in the same order as observed by the DM1, except for the object chosen by the DM1. Later DMs face an analogous problem in which the 
objects chosen by the previous DMs have been removed from the initial sequence. Intuitively, the probability that the $(i+1)$-th DM chooses the best of all the objects is smaller than
the probability that the $i$-th DM chooses the best of all the objects. 

Such a problem can be interpreted as an $n$-player game in which the players are subject to a linear hierarchical structure. A Nash equilibrium of such a game can be derived by solving a sequence of nested optimization problems. DM1 should use the optimal strategy in the secretary problem. Given that DM1 uses such a strategy, DM2 then faces an auxiliary optimization problem. The solution to this problem gives the optimal reaction of DM2 to the optimal strategy of DM1. In games with a larger number of players, the auxiliary optimization problem faced by the $i$-th DM can be defined iteratively by first deriving the optimal strategies of the first $i-1$ DMs in the appropriately defined auxiliary optimization problems. 

In this article we derive the form of the optimal 
response of DM2 to the optimal strategy of DM1 and several results regarding the optimal probability with which DM2 chooses the best of all the objects. For convenience, DM1 and DM2 will be referred to as she and he, respectively. There are a number of issues regarding how one may interpret the rules of this game without essentially changing the problem faced by DM2. Firstly, it is assumed 
that before DM2 begins observing the sequence, he is not explicitly given any information regarding whether DM1 accepted an object or not. Note that DM2 should automatically reject any object that is not the best he has seen so far. Any object that is the best seen so far by DM2 will be referred to as a candidate according to DM2. Any moment at which such a candidate appears will be referred to as a decision point for DM2. Suppose that DM1 uses her optimal strategy, which is to accept the 
first relatively best object to appear no earlier than at moment $n^{\ast}$. Given that the $n$-th object observed by DM2 is a candidate, where $n<n^{\ast}$, then DM2 knows that DM1 has not yet accepted an object. Since he can compare this candidate with all of the previous objects in the original sequence, then he knows that 
such an object is the best of the first $n$ objects in the original sequence. 

Now suppose that 
  the $n$-th object observed by DM2 is a candidate, where $n\geq n^{\ast}$. In this case,  DM2 knows that this cannot be the first relatively best object to appear 
no earlier than at moment $n^{\ast}$ in the original sequence, since otherwise DM1 
would have taken this object. Hence, such an object must be the $(n+1)$-th object in the original sequence. Hence, at any decision point DM2 knows 
how many objects are yet to appear. It follows from this argument that the problem faced by DM2 is essentially the same in the following two scenarios: a) first DM1 observes the sequence alone and makes her choice, then DM2 observes the objects not chosen by DM1 in the same order, b) at moment $i$, $i=1,2,\ldots ,N$, DM1 observes the $i$-th object and then if she does not accept it, then DM2 observes the object and makes his decision. 

It will be shown that the decision on whether a candidate should be accepted by Player 2 
depends on both the probability that the current object is the best of those to appear so far and the number of objects to have appeared so far. However, when $n>n^{\ast}$ the probability that the current object is the best to have appeared so far depends on the times at which candidates appeared between moment $n^{\ast}$ and $n$. Hence, the optimal 
strategy is history dependent. It will be argued that this history dependence does not disappear asymptotically. A lower limit for the equilibrium reward of Player 2 is obtained by considering strategies that depend on the number of candidates to have appeared between 
moment $n^{\ast}$ and the current moment. By increasing the number of candidates that can be remembered, we obtain successivley larger lower bounds for the expected reward of Player 2. However, the limit of this set of bounds is strictly below his optimal expected reward. An upper limit on this expected reward is obtained by considering a problem in which 
Player 2 is given information about the relative rank of a candidate whenever such a candidate is rejected. These approaches are similar to those used to obtained bounds on the optimal expected rank of the object chosen in Robbins' problem.

Section \ref{sec2} gives a description of the classical secretary problem. It also illustrates the general approach to solving the auxiliary problem faced by Player 2 on the basis of 
two players observing a sequence of length 4.
Section \ref{sec3} then describes the form of the optimal response of Player 2 for finite $N$. The exact solution of the game when $N=50$ is also derived. 
Since the optimal strategy for $N=50$ is somewhat complex, in Section \ref{sec4} a near optimal strategy based on the number of candidates to appear after moment $n^{\ast}$ is derived.  A number of asymptotic results are given in Section \ref{sec5}. Upper bounds are derived using the form of the near optimal strategy based on counting the number of candidates after moment $n^{\ast}$. An upper bound is derived by giving DM2 additional 
information whenever he rejects a candidate after moment $n^{\ast}$. Section \ref{sec6} summarizes the results and gives directions for future research.

\section{The Auxiliary Problem in the Case $N=4$} \label{sec2}

This section first recalls the classical secretary problem. Many of the techniques described here are employed to obtain the solution of the optimization problem faced by Player 2. In addition, the version of the game outlined in Section \ref{sec1} with 
$N=4$ is considered in detail. This is the simplest version of the game illustrating the factors that need to be taken into consideration when deriving a  solution to this game for general $N$.

It is assumed that $N$ objects of differing values appear in random order. Define $A_n$ to be the absolute rank of object $n$ among all the objects. The goal of both players is to select the best object, i.e. the object with absolute rank 1. Let 
$R_n$ be the rank of the $n$-th object among the objects that have already appeared. This will be referred to as the relative rank of an object. 
Thus $R_n =1$ when the $n$-th object is the best object seen so far and 
$P(R_n =i)=\frac{1}{n}$ for $i=1,2,\ldots ,n$.  

Since Player 1 always has priority, it follows that at equilibrium she uses the optimal strategy 
in the classical secretary problem (see Gilbert and Mosteller 1966). One way of deriving this solution is by dynamic programming. 
Since the objects appear in random order, the probability that the $n$-th object is the 
best seen so far (i.e. has relative rank equal to 1) is $\frac{1}{n}$. The probability of such an object being the best overall (i.e. having absolute rank equal to 1) is $\frac{n}{N}$. Let 
$u_n$ be the optimal probability of Player 1 selecting the best object overall when she has already observed and rejected $n$ 
objects, $0\leq n\leq N$.  By definition 
$u_N=0$ and $u_0$ is the value of the game to Player 1. Player 1 only needs to 
make a decision when the current object has relative rank 1 and should accept such an object 
if and only if the probability that it is the best object overall is greater than the
probability of selecting the best object overall when search continues. Hence, 
\begin{equation}
u_{n} = \frac{1}{n+1} \max\left\{ \frac{n+1}{N},u_{n+1}\right\} + \frac{nu_{n+1}}{n+1}. \label{opt}
\end{equation}
It follows directly from this optimality equation that $u_{n}\geq u_{n+1}$. Hence, the 
equilibrium strategy of Player 1 is to accept the first object of relative rank 1 to appear from 
time $n^{\ast}$ onwards, where $n^{\ast}$ is the smallest integer satisfying 
$\frac{n^{\ast}}{N}\geq u_{n^{\ast}}$. Since $\frac{N}{N}>u_{N}=0$, $n^{\ast}$ 
exists. For $n<n^{\ast}, u_{n}=u_{n^{\ast}-1}$. 

For example, in the classical secretary problem with $N=4$, $u_{4}=0, u_{3} = \frac{1}{4}$. 
\begin{eqnarray*}
u_{2} & = & \frac{1}{3} \max \left\{ \frac{3}{4}, \frac{1}{4} \right\} +\frac{2}{3}\times \frac{1}{4} = \frac{5}{12} \\
u_{1} & = &  \frac{1}{2} \max \left\{ \frac{2}{4}, \frac{5}{12} \right\} +\frac{1}{2}\times \frac{5}{12} = \frac{11}{24} \\
u_{0} & = & \max \left\{ \frac{1}{4}, \frac{11}{24} \right\} = \frac{11}{24} 
\end{eqnarray*}
Thus the equilibrium strategy of Player 1 is to accept the first object of relative rank 1 to appear after the initial object.

One may also consider the transitions between moments at which objects of relative rank 1 appear. Such objects will be referred to as candidates according to Player 1.  These are the only moments at which Player 1 must make a decision. Let 
$\tau_j$ be the moment at which the $j$-th such candidate appears. Suppose that 
$\tau_j = n$. The next candidate appears at moment $k$, where $k>n$, if and only if the objects appearing at moments $n+1, n+2, \ldots k-1$ do not have relative rank 1, but 
the object appearing at time $k$ does. Hence, for $n<k< N$
\begin{equation}
P(\tau_{j+1}=k|\tau_j =n) = \frac{n}{n+1} \frac{n+1}{n+2} \ldots \frac{k-2}{k-1} \frac{1}{k} = 
\frac{n}{(k-1)k}. \label{tm1}
\end{equation}
It should be noted that the probability that the first candidate to appear after moment $n$ 
appears at moment $k$ is independent of whether the object to appear at moment $n$ is a 
candidate or not.  

If no candidate appears after time $\tau_j$, then by definition $\tau_{j+1}=N+1$. It should 
be noted that the event $\tau_{j+1}=N+1$ is equivalent to the event that the candidate 
appearing at time $\tau_j$ has absolute rank 1. By splitting the summand in the following equation using partial fractions, it follows that
\begin{equation}
P(A_{n}=1|R_n =1) = 1-\sum_{k=n+1}^{N} \frac{n}{(k-1)k} = \frac{n}{N}. \label{tm2}
\end{equation}

The expected reward obtained by accepting the next object of relative rank one to appear after moment $n$  is given by $z_n$, where 
\begin{equation}
z_n = \sum_{k=n+1}^N  \frac{n}{(k-1)k} \times \frac{k}{N} = \frac{n}{N} \sum_{k=n+1}^N \frac{1}{k-1} . \label{tm4}
\end{equation}
Clearly, Player 1 should not stop at time $\tau_j$ given that the expected reward from stopping at time $\tau_{j+1}$ (i.e. waiting for the next candidate to appear and accepting such a candidate) is greater. Hence, we consider a one-step look ahead rule (an OSLA rule). Under such a strategy, Player 1 accepts the first object that gives an expected reward that is at least as great as the 
expected reward from accepting the next candidate to appear. Let $B_j$ be the set of 
states in which Player 1 stops at decision point $\tau_j$ when using an OSLA rule. Chow
\textit{et al.} (1971) showed that an OSLA 
rule is optimal whenever the following condition is satisfied: if the state at time $\tau_j$
is in $B_j$, then the state at time $\tau_{j+1}$ (given that $\tau_{j+1}\leq N$) belongs to $B_{j+1}$ with probability 1. Suppose that under an OSLA rule, Player 1 should stop at time $\tau_j = n$, i.e. $z_n \leq \frac{n}{N}$. It follows that 
\begin{equation}
 \frac{n}{N} \sum_{k=n+1}^N \frac{1}{k-1}\leq \frac{n}{N} \Rightarrow 
\sum_{k=n+1}^N \frac{1}{k-1}\leq 1. \label{tm5}
\end{equation}
It is clear that if this inequality is satisfied at moment $n$, then it is satisfied at any moment 
$m>n$. It thus follows that an OSLA rule is optimal. For the secretary problem, the optimal 
strategy is to accept the first candidate to appear at or after moment $n^{\ast}$, where 
$n^{\ast}$ is the smallest value of $n$ to satisfy Inequality (\ref{tm5}). The optimal reward 
from search is given by $z_{n^{\ast}-1}$.

In order to obtain asymptotic results regarding the solution of such problems when 
$N\rightarrow \infty$, we define $t=\frac{n}{N}$ to be the proportion of objects that have been observed. Here, $t$ will be referred to as the time. The density function of the time $s$ at which the first object of relative rank 1 observed after time $t$ appears is given by $g(s; t)=\frac{t}{s^2}$. The probability that an object of relative rank 1 appearing at time $t$ has absolute rank 1 is given by
\begin{equation}
 1-\int_t^1 \frac{tds}{s^2} = t. \label{tm3}
\end{equation}
The expected reward obtained by accepting the first object of relative rank 1 to appear after time $t$ is given by $z(t)$, where
\begin{equation}
z(t) = \int_t^1 \frac{t}{s^2}\times s ds = -t\ln t. \label{tm6}
\end{equation}
The earliest time at which a candidate should be accepted, $t^{\ast}$, satisfies $t^{\ast} = -t^{\ast} \ln t^{\ast}$. Hence, $t^{\ast}=e^{-1}\approx 0.367879$. This is the optimal 
probability of accepting the absolutely best object and the optimal strategy is an OSLA rule.

Now we consider the response of Player 2 to the equilibrium strategy of Player 1.
Note that Player 2 does not observe the object chosen by Player 1, but observes all the other objects in the same sequence. Clearly, Player 2 should only consider objects that are the best that he has seen so far. Such objects will be referred to as candidates according to Player 2.

Suppose Player 2 considers the $i$-th object to appear to him to be a candidate, where 
$i<n^{\ast}$. At equilibrium, Player 2 knows that Player 1 will not have yet accepted an 
object. Hence, $R_i =1$ and Player 2 knows how many objects have appeared in total. Now assume that the $j$-th object is the first to be considered as a candidate by Player 2 after the appearance of the first $n^{\ast}-1$ objects. At equilibrium, Player 2 knows that this cannot be the first object of relative rank 1 to appear at or after moment $n^{\ast}$, since Player 1 
accepts such an object. Hence, Player 2 knows that the current object is the $j+1$-th in the original sequence. It follows that at any decision point, Player 2 can infer how many objects have appeared in total. This number will be denoted by the moment $n$. In addition, only the object obtained by Player 1 can be better than a candidate according to Player 2. Hence, if
 a candidate according to Player 2 appears at moment $n$, $n>n^{\ast}$, then 
$R_n \leq 2$.

This problem can be illustrated for the particular case $N=4$ by considering each of the $4! =24$ permutations 
representing the possible sequences of the absolute ranks of 4 objects (see Table 
\ref{tab1}). The upper index 0 signifies an object of relative rank 1 that is accepted by Player 1. The upper index $m$, $m\in \{ 1,2\}$, signifies the $m$-th candidate according to Player 2 
to appear after Player 1 has  accepted an object. Note that, at equilibrium, the first object always appears to Player 2 as a candidate. The payoffs $V_2 (\pi_k)$, $k=1,2,3$, give the payoff of Player 2 under the following strategies: $\pi_1$ - accept the first object to appear, 
$\pi_2$ - always accept the first candidate to appear after moment 2, 
$\pi_3$ - only accept a candidate at moment 4. It should be noted that the optimal response of Player 2 has to come from this set of strategies. 

The expected reward of Player 2 from using strategy $\pi_k$ is the proportion of permutations for which Player 2 selects the object with absolute rank 1. It follows from Table
\ref{tab1} that
\[
E[V_2 (\pi_1)]=1/4; \hspace{.2in} E[V_2 (\pi_2)] = 5/24; \hspace{.2in} E[V_2 (\pi_3)]=1/6.
\]
Hence, the optimal response of Player 2 to the equilibrium strategy of Player 1 is to accept the first object to appear. 

In four of the eight cases where Player 2 observes a candidate at moment 3 and 
two of the four cases  where Player 2 observes a candidate at moment 4 after not observing a candidate at time 3, such a candidate has relative rank 1 (i.e. is better than the object 
selected by Player 1). Hence, the probability that the first candidate seen by 
Player 2 after moment 2 has relative rank 1 is 1/2, regardless of when such a 
candidate appears. The second candidate to appear to Player 2 after moment 2 can only appear at 
the final moment. In two of these three cases, such a candidate has relative (and also absolute) rank 1. Hence, the probability that the second candidate to appear to Player 2 after moment 
2 has relative rank 1 is equal to 2/3. 

Suppose that Player 1 accepts an object at moment $\mu_0$. For convenience, when Player 1 does not accept an object, let $\mu_0 =N+1$. Define the moment at which the $m$-th candidate appearing to Player 2 after time $n^{\ast}$ is observed to be $\mu_m$.
If no such candidate is observed by Player 2, then let $\mu_m = N+1$. Note that 
when $\mu_1 \leq N$, then for all $\mu_0<n< \mu_1$, $R_{n}>2$ and $R_{\mu_1}\leq 2$.
 Since the objects appear in a random sequence, it follows that
$R_{\mu_1} = 1$ with probability 0.5 independently of the moment it appears. On the other hand, 
for $m\geq 2$, it will be shown in Section \ref{sec3} that the 
probability of
$R_{\mu_m}$ being equal to 1 depends on the history of the process (to be specific, on the moments at which each candidate appeared). This probability is bounded below by $\frac{m}{m+1}$ and 
always increases when a new candidate appears. 

\begin{table}[h]
\caption{Possible Sequences of Absolute Ranks with $N=4$} \label{tab1}%
\begin{tabular}{@{}lllllll@{}}
\toprule
$n=1$ & $n=2$  & $n=3$ & $n=4$ & $V_2 (\pi_1)$ & $V_2 (\pi_2)$ & $V_2 (\pi_3)$  \\
\midrule
1 & 2 & 3 & 4 & 1 & 0 & 0   \\
1 & 2 & 4 & 3 & 1 & 0 & 0 \\
1 & 3 & 2 & 4 & 1 & 0 & 0 \\
1 & 3 & 4 & 2 & 1 & 0 & 0 \\
1 & 4 & 2 & 3 & 1 & 0 & 0 \\
1 & 4 & 3 & 2 & 1 & 0 & 0 \\
2 & $1^0$ & 3 & 4 & 0 & 0 & 0 \\
2 & $1^0$ & 4 & 3 & 0 & 0 & 0 \\
2 & 3 & $1^0$ & 4 & 0 & 0 & 0 \\
2 & 3 & 4 & $1^0$ & 0 & 0 & 0 \\
2 & 4 & $1^0$ & 3 & 0 & 0 & 0 \\
2 & 4 & 3 & $1^0$ & 0 & 0 & 0 \\
3 & $1^0$ & $2^1$ & 4 & 0 & 0 & 0 \\
3 & $1^0$ & 4 & $2^1$ & 0 & 0 & 0 \\
3 & $2^0$ & $1^1$ & 4 & 0 & 1 & 0 \\
3 & $2^0$ & 4 & $1^1$ & 0 & 1 & 1 \\
3 & 4 & $1^0$ & $2^1$ & 0 & 0 & 0 \\
3 & 4 & $2^0$ & $1^1$ & 0 & 1 & 1 \\
4 & $1^0$ & $2^1$ & 3 & 0 & 0 & 0 \\
4 & $1^0$ & $3^1$ & $2^2$ & 0 & 0 & 0 \\
4 & $2^0$ & $1^1$ & 3 & 0 & 1 & 0 \\
4 & $2^0$ & $3^1$ & $1^2$ & 0 & 0 & 1 \\
4 & $3^0$ & $1^1$ & $2$ & 0 & 1 & 0  \\
4 & $3^0$ & $2^1$ & $1^2$ & 0 & 0 & 1\\  
\botrule
\end{tabular}
\footnotetext{Upper indices: 0 signifies the object selected by Player 1, $m=1,2$ signifies the numbers of candidates, as observed by Player 2, after Player 1 selects an object. $V_2$ denotes the reward of Player 2, $\pi_1$, $\pi_2$ and $\pi_3$ denote 
the strategies "accept the first object", "accept any candidate from moment 3 onwards", 
"accept a candidate only at the final moment".}
\end{table}

In general, Player 2 is in a weaker 
position than Player 1. We thus expect that Player 2 should 

\begin{description}
\item[1)] be prepared to accept a candidate at some moment $n_0$, where $n_0 <n^{\ast}$. Suppose a candidate appears to Player 2 at moment $n$, where $n\in \{ n_0, n_0+1,n_0+2,\ldots ,n^{\ast}-1\}$. Such a candidate will have relative rank 1, independently of the previous history of the game (the relative ranks of the previous objects to appear).  
\item[2)] when deciding whether to accept a candidate at moment $n$, where 
$n\geq n^{\ast}$, take into account both the number of objects seen and the probability that
the current candidate has relative rank 1. Since this probability falls from 1 to 0.5 when 
$n$ first exceeds $n^{\ast}$, it may be optimal for Player 2 to reject the first candidate to 
appear to him after moment $n^{\ast}$ (as long as it appears early enough).
\end{description}

Suppose that Player 2 observes a candidate at moment $n$, where $n>n^{\ast}$.  It follows that $R_{n} \in \{ 1,2\}$. In order to derive the distribution of $\mu_{m+1}$ given $\mu_m = n$, we need to first consider the distribution of $\mu_{m+1}$ given that the relative rank $R_n$ of the current candidate is equal to $r$, $r=1,2$. When 
$r=1$, for $\mu_m < n<\mu_{m+1}$, $R_n >1$ and $R_{\mu_{m+1}}=1$. Hence, the distribution of the moment of the appearance of the next candidate given $R_{\mu_m} =1$ is given by Equation (\ref{tm1}). When $r=2$, for $\mu_m < n<\mu_{m+1}$, $R_n >2$ and $R_{\mu_{m+1}}\leq 2$.  Arguing as in the derivation of Equation (\ref{tm1}), we obtain
\begin{equation}
P(\mu_{m+1}\! =\! k|\mu_m \! =\! n, R_n\! =\! 2) \! = \! \frac{n\! -\! 1}{n\! +\! 1} \frac{n}{n\! +\! 2} \frac{n\! +\! 1}{n\! +\! 3}\ldots \frac{k\! -\! 4}{k\! -\! 2} \frac{k\! -\! 3}{k\! -\! 1} \frac{2}{k} \! = \! \frac{2n(n - 1)}{k(k\! -\! 1)(k\! -\! 2)}. \label{tm7}
\end{equation}
Again, the distribution of the time that passes until the next object of relative rank $\leq 2$ appears is independent of the relative rank of the current object. In particular, the distribution of the moment at which the first candidate appears to Player 2 after the moment Player 1 accepts an object is also given by Equation (\ref{tm7}).

Given that the current candidate has relative rank 2, the next candidate has relative rank 1 with probability 0.5. Let $y_n$ be Player 2's expected reward from accepting the next candidate to appear after one of relative rank 2 appears at moment $n$. It follows that
\begin{equation}
y_n = \sum_{k=n+1}^N \frac{2n(n-1)}{k(k-1)(k-2)} \times \frac{k}{2N} = \frac{n(n-1)}{N} \sum_{k=n+1}^N \frac{1}{(k-1)(k-2)}. \label{yn}
\end{equation} 

Splitting the expression in Equation (\ref{tm7}) using partial fractions, it follows that
\begin{equation}
P(\mu_{m+1}\! =\! N+1|\mu_m \! =\! n, R_n\! =\! 2) \! =1- \sum_{k=n+1}^{N} \frac{2n(n - 1)}{k(k\! -\! 1)(k\! -\! 2)} = \frac{n(n-1)}{N(N-1)}. \label{tm9}
\end{equation}
This is the probability that the object observed at moment $\mu_m$ is the last candidate 
to be observed given that $R_{\mu_m}=2$.

Now we consider the asymptotic distribution of the time
 $s$ at which the first object of relative rank $\leq 2$ observed after time $t$ appears. The density function of this time is given by $h(s; t)=\frac{2t^2}{s^3}$. 
Let $y(t)$ be Player 2's expected reward from accepting the next candidate to appear after one of relative rank 2 appears at time $t$. It follows that
\begin{equation}
y(t) = \int_t^1 \frac{2t^2}{s^3} \times \frac{sds}{2} = t(1-t). \label{yt}
\end{equation}

The probability that no object appearing after time $t$ has absolute rank $\leq 2$ is given by
\begin{equation}
 1-\int_t^1 \frac{2t^2 ds}{s^3} = t^2. \label{tm8}
\end{equation}

The following section derives the form of the optimal policy for finite $N$.

\section{The Form of the Equilibrium} \label{sec3}

As stated in Section \ref{sec2}, the equilibrium strategy of Player 1 is identical to the optimal strategy of a decision maker in the classical secretary problem. Hence, solving this game theoretical model reduces to finding the optimal response of Player 2. Let $H_n$ be the history of the process according to Player 2 after observing $n$ objects when $n$ is a decision point (i.e. when the $n$-th object seen by Player 2 is the best he has seen so far). For $n<n^{\ast}$, this history can be described by the realisations of the relative ranks 
 $R_1 ,R_2,\ldots ,R_n$, since Player 2 sees all the objects up to moment $n$.
For $n\geq n^{\ast}$, as argued above, at equilibrium Player 2 can infer at any decision point that Player 1 has already obtained an object. In this case, the $n$-th object observed by Player 2 will be the $n+1$-th object in the sequence. In this case, the history $H_n$ can be described by the realisations of the relative ranks 
$S_1 ,S_2,\ldots ,S_{n}$, where $S_i$ is the rank of the $i$-th object observed by Player 2 
with respect to the objects previously observed by him. Suppose Player 1 accepts a candidate at moment $\mu_0$, where $\mu_0$ is the first moment $n\geq n^{\ast}$ such that 
$R_{\mu_0}=1$. By definition, for $n<\mu_0$, 
$S_n =R_n$. For $n>\mu_0$, since Player 2 sees all of the objects except for the one selected by Player 1, $S_{n}$ is either equal to $R_{n+1}$ (when the object observed by 
Player 2 is better than the object selected by Player 1) or $R_{n+1}+1$ (when the object observed by Player 2 is worse than the object selected by Player 1).
To find the optimal response of Player 2, we should derive  $P(R_{M(n)} = 1|S_{n}=1,H_{n})$, where $M(n)$ is the moment at which Player 2 observes the $n$-th object seen by him. As argued above, for $n\leq n^{\ast}-1$, at equilibrium $M(n)=n$ with probability 1 and 
  $P(R_n = 1|S_n=1,H_{n})=1$. Also, when any candidate appears to Player 2 after moment $n^{\ast}-1$, then $M(n)=n+1$ with probability 1. Suppose that the first such candidate is the $n$-th object to be seen by Player 2. It was argued in Section \ref{sec2} that 
\[
P(R_{n+1}=1|S_n =1, H_{n})=P(R_{\mu_1}=1|H_{\mu_1}) = 1/2.
\]
It should be stressed that $\mu_m$ is the moment at which the $m$-th such candidate appears. Hence, in the equation above $\mu_1 = n+1$.
Note that the moment at which the $m$-th such candidate appears satisfies 
$\mu_m \geq n^{\ast}+m$. For $n^{\ast}+m\leq \mu_m \leq N$, define $P(R_{\mu_m}=1|H_{\mu_m})=p_{m,\mu_m,H}$. It follows that $P(R_{\mu_m}=2|H_{\mu_m})=1-p_{m,\mu_m,H}$. Although the value of $m$ can be inferred from the history of the process, $m$ is used as a separate argument in order to more easily compare the optimal strategy with the near-optimal strategy derived in Section \ref{sec4}.

Given that $\mu_m =n$, the distribution of the time of the appearance of the next candidate 
to Player 2 can be derived by applying the law of total probability to Equations (\ref{tm1}) and (\ref{tm7}), i.e.
\begin{equation}
P(\mu_{m+1}=k|\mu_m=n,\! H_{n-1})  =  \frac{np_{m,n,H}}{k(k-1)} + \frac{2n(n-1)(1-p_{m,n,H})}{k(k-1)(k-2)}. \label{totprob} 
\end{equation}
By definition,
\begin{equation}
P(R_{\mu_{m+1}}\! \! =\! 1|\mu_{m+1}\! =\! k, \mu_{m}\! =\! n, H_{n-1}) \! = \!  \frac{P(R_{\mu_{m+1}}\! =\! 1,\mu_{m+1}\! =\! k, \mu_{m}\! =\! n, H_{n-1})}{P(\mu_{m+1}=k, \mu_{m}=n, H_{n-1})}.
\label{cond1}
\end{equation}
In addition,
\begin{eqnarray}
P(\mu_{m+1}\! \! =\! \! k, \mu_{m}\! \! =\! \! n, H_{n-1}) & \! \! = \! \! & P(\mu_{m+1}=k|\mu_{m}=n, H_{n-1})P(\mu_m =n,H_{n-1}) \nonumber \\
& \! \! = \! \! & \left[ \frac{np_{m,n,H}}{k(k-1)} \! + \! \frac{2n(n\! -\! 1)(1\! -\! p_{m,n,H})}{k(k-1)(k-2)} \right] \! \! 
P(\mu_m\! =\! n, \! H_{n-1}). \label{joint1}
\end{eqnarray} 
Since $R_{\mu_m}$ can only take the values 1 and 2, we obtain
\begin{eqnarray}
P(R_{\mu_{m+1}}\! \! =\! \! 1,\mu_{m+1} & \! \!  = \! \!  & k, \mu_{m}\! =\! n, \! H_{n-1}) \! \! = \!  \! \sum_{i=1}^2 \! \! P(R_{\mu_{m+1}}\! \! =\! \! 1,\mu_{m+1}\! \! =\! \! k, R_{\mu_m}\! \! =\! \! i, \mu_{m}\! \! =\! \! n, H_{n-1}) \nonumber \\
 &  \! \!  =   \! \!  &  \sum_{i=1}^2 \left[ P (R_{\mu_{m+1}} = 1,\mu_{m+1} = k | R_{\mu_m} = i, \mu_{m} = n, H_{n-1})\times  \ldots \right. \nonumber \\
  & \! \! \! \!   & \left. \times P(R_{\mu_m}=i|\mu_m =n, H_{n-1}) P(\mu_m =n,H_{n-1}) \right] 
\nonumber \\
 &   \! \!  =   \! \! &   \left[ \frac{np_{m,n,H}}{k(k-1)}  + \frac{n(n - 1)(1 - p_{m,n,H})}{k(k-1)(k-2)} \right] \! P(\mu_m \! = \! n,H_{n-1}).  \! \! \! \!  \! \! \! \! 
 \label{joint2}
\end{eqnarray} 
From Equations (\ref{joint1}) and (\ref{joint2}), it follows that
\begin{eqnarray} 
 p_{m+1,k,H} & = & P (R_{\mu_{m+1}} = 1|\mu_{m+1} = k,\mu_m  = n, H_{n-1}) \nonumber \\
& = & \frac{(k - 2)p_{m,n,H} + (n - 1)(1 - p_{m,n,H})}{(k - 2)p_{m,n,H} + 2(n - 1)(1 - p_{m,n,H})} . \nonumber \\
& = & 1 - \frac{(n-1)(1-p_{m,n,H})}{(k-2)p_{m,n,H}+2(n-1)(1-p_{m,n,H})}. \label{induc1}
\end{eqnarray}
Since $p_{1,n,H}=1/2$ for all $n^{\ast}<n<N$, the value of $p_{m,k,H}$ for any given 
realization of the process can be calculated inductively. It can be seen that a list of the 
decision points of Player 2 $(\mu_1 ,\mu_2, \ldots ,\mu_{m-1}$ where $\mu_1 >n^{\ast})$ is a rich enough history to calculate $p_{m,n,H}$. For mathematical convenience, we set 
$p_{0,n,H}=0$ for all $n$. This conforms to the inductive definition given above.

For integers $n^{\ast}\leq n<k\leq N$ and $p\in [0,1]$ Define 
\begin{equation}
f(n,k,p) = 1 - \frac{(n-1)(1-p)}{(k-2)p+2(n-1)(1-p)}.
\end{equation}

\begin{proposition} \label{prop1}
\begin{description}
Given $\mu_m =n$ and $\mu_{m+1}=k>n$:
\item[a)] $p_{m+1,k,H}>p_{m,n,H}$.
\item[b)]  For fixed $n$, $p_{m+1,k,H}$ is 
strictly increasing in $k$.
\item[c)] For fixed $k$ and $n$, $f(n,k,p)$ is strictly increasing in $p$.
\item[d)] The minimum value of $p_{m,n,H}$ is equal to $\frac{m}{m+1}$ and is only attained when  the first $m$ decision points of Player 2 after moment $n^{\ast}$ occur at successive moments.
\end{description}
\end{proposition}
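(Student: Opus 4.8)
The plan is to first rewrite $f$ in a form where the sign of each relevant difference or derivative is transparent. Combining the two terms gives
\begin{equation*}
f(n,k,p) = \frac{(k-2)p + (n-1)(1-p)}{(k-2)p + 2(n-1)(1-p)},
\end{equation*}
and since $n \geq n^{\ast}+1 \geq 2$ and $k > n$, the denominator $v:=(k-2)p + 2(n-1)(1-p)$ is strictly positive for every $p\in[0,1]$. Parts (a) and (c) then reduce to elementary sign checks on a single quotient. For (c), differentiating in $p$ makes the quadratic-in-$p$ contributions cancel, leaving $\partial f/\partial p = (n-1)(k-2)/v^{2}$, which is strictly positive. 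For (a), writing $f(n,k,p)-p$ over the common denominator $v$ and simplifying yields the numerator $(1-p)\big[(1-p)(n-1)+p(k-n-1)\big]$, which is strictly positive for $p\in(0,1)$ because $n-1\geq 1$ and $k-n-1\geq 0$; since $p_{m,n,H}\in[1/2,1)$ for $m\geq 1$ (using $f<1$ together with (a) inductively), this gives $p_{m+1,k,H}>p_{m,n,H}$.

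For (b), I hold $n$ and $p=p_{m,n,H}$ fixed and differentiate $f$ in $k$. Both the numerator $u$ and the denominator $v$ are affine in $k$ with the common slope $p$, so $\partial f/\partial k = p\,(v-u)/v^{2}$; computing $v-u=(n-1)(1-p)$ gives $\partial f/\partial k = p(n-1)(1-p)/v^{2}>0$ for $p\in(0,1)$. Strict monotonicity over integer $k$ follows at once.

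The substance of the proposition is (d), which I would prove by induction on $m$ using the strict monotonicity from (b) and (c). The base case $m=1$ is immediate, since $p_{1,n,H}=1/2=\tfrac{1}{1+1}$. The key computation for the step is that when two successive decision points are adjacent, i.e. $k=n+1$, the formula collapses to $f(n,n+1,p)=1/(2-p)$, independently of $n$. Assume the claim for length $m-1$, so every history satisfies $p_{m-1,\mu_{m-1},H}\geq\tfrac{m-1}{m}$, with equality only for successive points. Given $\mu_{m-1}=n$ and $\mu_m=k\geq n+1$, part (c) gives $p_{m,k,H}=f(n,k,p_{m-1,n,H})\geq f(n,k,\tfrac{m-1}{m})$, and part (b) gives $f(n,k,\tfrac{m-1}{m})\geq f(n,n+1,\tfrac{m-1}{m})=1/(2-\tfrac{m-1}{m})=\tfrac{m}{m+1}$, whence $p_{m,k,H}\geq\tfrac{m}{m+1}$. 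Both inequalities are strict unless $p_{m-1,n,H}=\tfrac{m-1}{m}$ and $k=n+1$, so equality forces the first $m-1$ points to be successive and the $m$-th to immediately follow; conversely, iterating $p_j=1/(2-p_{j-1})$ from $p_1=1/2$ along a successive run returns exactly $\tfrac{m}{m+1}$.

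The main obstacle is organising (d) so that the two-way role of each $\mu_j$ — as the arrival time $k$ in one transition and the base time $n$ in the next — does not let the estimates interfere. The clean resolution is the observation that $f(n,n+1,p)$ is independent of $n$: this decouples the \emph{positions} of the decision points from the \emph{value} of the bound, so the induction need only track how many gaps have size one, and the uniqueness clause then follows directly from the strictness already obtained in (b) and (c).
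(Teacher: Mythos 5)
Your proposal is correct and follows essentially the same route as the paper: the same algebraic simplification of $f(n,k,p)-p$ for (a), the same positivity of $\partial f/\partial p$ for (c), monotonicity in $k$ via the same denominator for (b), and the identical induction for (d) in which the minimum is attained at $k=n+1$ with $f\bigl(n,n+1,\tfrac{m-1}{m}\bigr)=\tfrac{m}{m+1}$. Your explicit remarks that $f(n,n+1,p)=1/(2-p)$ is independent of $n$ and that $p_{m,n,H}<1$ (needed for strictness in (a)) are minor refinements of what the paper leaves implicit.
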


\begin{proof}
To prove part a), note that
\begin{eqnarray*}
p_{m+1,k,H} - p_{m,n,H} & = & \frac{(k - 2)p_{m,n,H} + (n - 1)(1 - p_{m,n,H})}{(k - 2)p_{m,n,H} + 2(n - 1)(1 - p_{m,n,H})} - p_{m,n,H} \\
& = & \frac{(k-n-1)(p_{m,n,H}-p_{m,n,H}^2)+(n-1)(1-p_{m,n,H})^2}{(k - 2)p_{m,n,H} + 2(n - 1)(1 - p_{m,n,H})}. 
\end{eqnarray*}
This expression is positive, since by definition $k\geq n+1$.

Part b) follows directly from the fact that the denominator of the fraction in Equation 
(\ref{induc1}) is clearly increasing in $k$, while the numerator is fixed.

Part c) follows from the fact that
\[
\frac{\partial f}{\partial p} = \frac{(n-1)(k-2)}{[(k-2)p+2(n-1)(1-p)]^2}>0.
\]

To prove part d), assume that the theorem holds for $m=j$. 
From  Equation (\ref{induc1}), together with parts b) and c) of the propostion, it follows that when $p_{j,n,H}=\frac{j}{j+1}$, then the minimum value of $p_{j+1,k,H}$ is attained when $k=n+1$ and 
\[ 
p_{j+1,n+1,H} = 1 - \frac{(n-1)/(j+1)}{(n-1)j/(j+1)+2(n-1)/(j+1)} = \frac{j+1}{j+2}.
\]
Hence, the theorem follows for $m=j+1$. The theorem then follows by induction from the fact that $p_{1,n,H}=1/2$ . 
\end{proof}

Let the state of Player 2 on observing the $m$-th candidate seen after moment $n^{\ast}$ 
at moment $n$ to be $(n,p_{m,n,H})$. Define $v(n,p_{m,n,H})$ to be the optimal expected 
payoff of Player 2 after rejecting a candidate in this state. Given that the current candidate 
has relative rank 1, it is the best of all the objects with probability $n/N$. Otherwise, it cannot be the best of all the objects. It follows 
that Player 2 should accept such a candidate if and only if 
\[
v(n,p_{m,n,H}) \leq \frac{n p_{m,n,H}}{N}.
\]
This state space is rather complex, especially when $N$ is large. However, $p_{m,n,H}$ can only take certain values 
in the interval $[0.5, 1]$ [for example $p_{1,n,H}=1/2, p_{2,n,H}\geq 2/3$]. For convenience, for $n\geq n^{\ast}$ define $v(n,0)$ to be the optimal future expected reward of Player 2  after $n$ objects have appeared, Player 1 has already accepted an offer, but Player 2 has not yet observed a candidate after that moment. Analogously, for $1\leq n\leq N$ we define 
$v(n,-1)$ to be the optimal future expected reward of Player 2 when neither player has accepted any of the first $n$ objects. For $n\geq n^{\ast}$, Player 2 cannot distinguish between states $(n,-1)$ and $(n,0)$. However, Player 2 can infer the state at any of his decision points. 

From the above definitions, it follows that $v(N,p)=v(N-1,-1)=0$ for any $p\in \{ -1\} \cup [0,1]$.  The value of the game to Player 2 is given by $v(0,-1)$. For $n> n^{\ast}$ and 
$p_{m,n,H}> 0$, the optimality equation is given by 
\begin{equation}
v(n,p_{m,n,H}) = E\left[ \mathcal{I}_{k\leq N} \max \left\{ \frac{kp_{m+1,k,H}}{N}, v(k,p_{m+1,k,H})\right\} \right], \label{opt1p2}
\end{equation}
where $k$ is the moment at which the next candidate appears to Player 2 (or $N+1$ when no such candidate appears) and $\mathcal{I}$ is the indicator function. Since Player 1 accepts the first object of relative rank 1 to appear after moment $n^{\ast}-1$, then by conditioning on the
time of the appearance of such an object, it follows that for $n\geq n^{\ast}-1$
\begin{equation}
v(n,-1) = \sum_{k=n+1}^N \frac{nv(k,0)}{k(k-1)}. \label{opt2p2}
\end{equation}
For $n< n^{\ast}-1$, conditioning on whether an object of relative rank 1 appears before moment $n^{\ast}$ or not, the optimality equation is given by
\begin{equation}
v(n,-1) = E\left[ \mathcal{I}_{k<n^{\ast}} \max \left\{ \frac{k}{N},v(k,-1) \right\} \right] 
+v(n^{\ast}-1,-1) E[\mathcal{I}_{k\geq n^{\ast}}]. \label{opt3p2}
\end{equation}
The dependence of the optimality equations on $p_{m,n,H}$ significantly complicates the 
general form of the optimal response of Player 2. For example, suppose Player 2 
observes a candidate at moment $n^{\ast}-1$. Such a candidate will definitely have relative rank 1 and is thus the best of all objects with probability 
$\frac{n^{\ast}-1}{N}$. At equilibrium, Player 1 will still be searching at this moment. Now consider the situation in which Player 2 observes a candidate at moment $n^{\ast}+1$. Such a candidate has relative rank 1 with probability 0.5 and is thus the best of all objects with probability
$\frac{n^{\ast}+1}{2N}$. Unless $N$ is very small, this will be smaller than $\frac{n^{\ast}-1}{N}$. In addition, given such a candidate appears, at equilibrium Player 2 knows that Player 1 has already accepted an object and thus Player 2 does not face any competition in choosing an object. For these reasons, when $N$ is reasonably large, intuitively Player 2 should accept a candidate at moment $n^{\ast}-1$, but reject a candidate at moment $n^{\ast}+1$.

We start by deriving the form of the optimal strategy of Player 2 for $n> n^{\ast}$. 
Note that by stopping in state $(n,p_{m,n,H})$, Player 2 obtains an expected reward of $\frac{np_{m,n,H}}{N}$. Define the expected reward from stopping at the next decision point to be $w(n,p_{m,n,H})$.  Using the law of total probability, 
\begin{eqnarray}
w(n,p_{m,n,H}) \! \! \! & = & \! \! p_{m,n,H}z_n +(1-p_{m,n,H})y_n \nonumber \\
\! \! \! & = & \! \! \frac{np_{m,n,H}}{N} \! \! \left[ \sum_{k=n+1}^N \! \! \frac{1}{k\! -\! 1} \! \right]  \! + \!
\frac{(1\! -\! p_{m,n,H})n(n\! -\! 1)}{N} \! \! \left[ \sum_{k=n+1}^N \! \! \! \frac{1}{(k\! -\! \! 1)(k\! -\!2)}\! \right] \! \! . 
\label{opt4p2}
\end{eqnarray}
Using an OSLA rule, Player 2 accepts an object when $\frac{np_{m,n,H}}{N} \geq 
w(n, p_{m,n,H})$. This condition is equivalent to
\begin{equation}
p_{m,n,H} \! \geq \! q_n \! \! = \! \frac{S_{2,n}}{\!1\! +\! S_{2,n}\! -\! S_{1,n}\!}, \mbox{where } 
S_{1,n} \! = \! \! \! \! \! \sum_{k=n+1}^N  \! \! \frac{1}{k \! - \! 1\! } \mbox{ and } S_{2,n} \! = \! \! \! \! \! \sum_{k=n+1}^N  \! \!
\frac{n-1}{(k\!  -\! 1)(k\! -\! 2)\!} . \label{cond11}
\end{equation}

\begin{proposition} \label{prop2}
The term $q_n$ is decreasing in $n$ for $n> n^{\ast}$.
\end{proposition}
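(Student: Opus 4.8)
The plan is to first replace the two sums by closed forms and thereby collapse $q_n$ into a function of a single pair of monotone quantities. For $S_{2,n}$ the summand splits by partial fractions as $\frac{n-1}{(k-1)(k-2)} = (n-1)\left(\frac{1}{k-2} - \frac{1}{k-1}\right)$, so the sum over $k=n+1,\dots,N$ telescopes to the clean form
\[
S_{2,n} = (n-1)\left(\frac{1}{n-1} - \frac{1}{N-1}\right) = \frac{N-n}{N-1}.
\]
There is no comparable closed form for $S_{1,n}$, but all I need is the one-step recurrence $S_{1,n+1} = S_{1,n} - \frac{1}{n}$, obtained by dropping the $k=n+1$ term. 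Dividing numerator and denominator of $q_n$ by $S_{2,n}$ then rewrites it as
\[
q_n = \frac{1}{1+\rho_n}, \qquad \rho_n := \frac{1 - S_{1,n}}{S_{2,n}} = \frac{(N-1)(1 - S_{1,n})}{N-n}.
\]

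Next I would reduce the claim to the monotonicity of $\rho_n$. The hypothesis $n>n^{\ast}$ is exactly what makes this work: by the definition of $n^{\ast}$ through Inequality (\ref{tm5}), $S_{1,n}\leq 1$ for every $n\geq n^{\ast}$, so $1-S_{1,n}\geq 0$, whence $\rho_n\geq 0$ and the denominator $1+S_{2,n}-S_{1,n} = S_{2,n} + (1-S_{1,n}) > 0$. Since $x\mapsto \frac{1}{1+x}$ is strictly decreasing on $x>-1$, the statement that $q_n$ is decreasing in $n$ is equivalent to $\rho_n$ being increasing in $n$, which is what I would prove.

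Finally I would compute the forward difference of $\rho_n$ from the two facts above. Writing $a_n := 1 - S_{1,n}$, the recurrence gives $a_{n+1} = a_n + \frac{1}{n}$, while $S_{2,n} = \frac{N-n}{N-1}$ decreases linearly. A short computation over the common denominator $(N-n-1)(N-n)$ yields
\[
\rho_{n+1} - \rho_n = (N-1)\,\frac{a_n + \frac{N-n}{n}}{(N-n-1)(N-n)}.
\]
Both $a_n\geq 0$ (from the previous paragraph) and $\frac{N-n}{n}>0$, and the denominator is positive for $n\leq N-2$, so the difference is strictly positive and $\rho_n$ is strictly increasing. The single boundary pair $(N-1,N)$ is not covered by this formula, since $S_{2,N}=0$; I would dispose of it directly, noting that $S_{1,N-1}=S_{2,N-1}=\frac{1}{N-1}$ gives $q_{N-1}=\frac{1}{N-1}>0=q_N$.

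I expect the only real content to be spotting the telescoping that turns $S_{2,n}$ into $\frac{N-n}{N-1}$ and recognizing that the substitution $q_n=(1+\rho_n)^{-1}$ converts a two-sum monotonicity question into a one-line difference; everything after that is mechanical. The one point demanding care — and the one place where the hypothesis $n>n^{\ast}$ is genuinely used — is guaranteeing $a_n = 1 - S_{1,n}\geq 0$, without which neither the positivity of the denominator nor the sign of the difference would be assured.
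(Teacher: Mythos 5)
Your proof is correct, and it takes a genuinely different route from the paper's. The paper never computes $S_{2,n}$ in closed form: it argues that the numerator $S_{2,n}$ of $q_n$ is strictly decreasing while the denominator $1+S_{2,n}-S_{1,n}$ is strictly increasing, with both positive, and the decrease of $S_{2,n}$ is obtained by bounding $\sum_{k=n+2}^N \frac{1}{(k-1)(k-2)}$ above by $\frac{1}{n}\sum_{k=n+2}^N\frac{1}{k-1}<\frac{1}{n}$, which itself invokes $S_{1,n}<1$. Your telescoping identity $S_{2,n}=\frac{N-n}{N-1}$ makes that step exact ($S_{2,n}-S_{2,n+1}=\frac{1}{N-1}$, with no appeal to $n>n^{\ast}$ needed for that part), and the substitution $q_n=(1+\rho_n)^{-1}$ with $\rho_n=\frac{(N-1)(1-S_{1,n})}{N-n}$ collapses the two-part monotonicity argument into a single forward difference whose sign is manifest once $1-S_{1,n}\ge 0$. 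Both proofs ultimately use the hypothesis $n>n^{\ast}$ in the same essential place — to guarantee $S_{1,n}<1$, hence positivity of the denominator (paper) or of $a_n$ and $\rho_n$ (you) — but yours isolates that as the \emph{only} use of the hypothesis and additionally yields the exact decrement of $\rho_n$. Your separate treatment of the boundary pair $(N-1,N)$, where $S_{2,N}=0$ makes $\rho_N$ undefined, is a point the paper glosses over and is handled correctly by your direct evaluation $q_{N-1}=\frac{1}{N-1}>0=q_N$.
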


\begin{proof}
We have
\[
q_n - q_{n+1} = \frac{S_{2,n}}{1+S_{2,n}-S_{1,n}} - \frac{S_{2,n+1}}{1+S_{2,n+1}-S_{1,n+1}} 
\]
By definition $S_{2,n}>0$ and for $n>n^{\ast}$, $S_{1,n}<1$. Hence, the numerators and 
the denominators in the fractions above are all positive.
It thus suffices to show that the numerator in the first fraction, $S_{2,n}$, is greater than the numerator in the second fraction, $S_{2,n+1}$, and the denominator in the first fraction
is smaller than the denominator in the second fraction.
Note that $S_{1,n}-S_{1,n+1} =1/n$. In addition, 
\begin{eqnarray*} 
S_{2,n} - S_{2,n+1} & = & \sum_{k=n+1}^N \frac{n-1}{(k-1)(k-2)} - \sum_{k=n+2}^N \frac{n-1}{(k-1)(k-2)}- \sum_{k=n+2}^N \frac{1}{(k-1)(k-2)} \\
& = & \frac{1}{n} - \sum_{k=n+2}^N \frac{1}{(k-1)(k-2)}.
\end{eqnarray*}
Thus
\[
1+S_{2,n}-S_{1,n} = 1+S_{2,n+1}-S_{1,n+1}-\sum_{k=n+2}^N \frac{1}{(k-1)(k-2)}
< 1+S_{2,n+1}-S_{1,n+1}.
\]
In addition, 
\begin{eqnarray*} 
\sum_{k=n+2}^N \frac{1}{(k-1)(k-2)} & \leq &  \sum_{k=n+2}^N \frac{1}{(k-1)n} \\
& \leq &  \frac{1}{n} \sum_{k=n+2}^N \frac{1}{k-1} <  \frac{1}{n}.
\end{eqnarray*} 
The final inequality results from the fact that for $n>n^{ \ast}$,  $\sum_{k=n+1}^N \frac{1}{(k-1)}<1$. It follows that $S_{2,n}>S_{2,n+1}$.
\end{proof}

\begin{lemma}[to Proposition \ref{prop2}] \label{lemma1}
Suppose that for $n>n^{\ast}$ under an OSLA rule Player 2 accepts 
an object in state $(n,p_{m,n,H})$ when $(n,p_{m,n,H})\in B_n$. Let the state at the next decision point be $(k,p_{m+1,k,H})$. It follows from Proposition \ref{prop2} that $(k,p_{m+1,k,H})\in B_k$, since $k>n$ 
and $p_{m+1,k,H}>p_{m,n,H}$. Hence, an OSLA rule is optimal for $n\geq n^{\ast}$.  
\end{lemma}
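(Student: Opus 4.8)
The plan is to verify the hypothesis of the monotone-case optimality criterion of Chow \textit{et al.} (1971) that was already invoked for Player 1, namely: if the state at a decision point lies in the stopping set $B_n$, then the state at the next decision point (provided it occurs no later than moment $N$) lies in the corresponding stopping set with probability 1. Once this single condition is checked, optimality of the OSLA rule for $n\geq n^{\ast}$ follows immediately from the cited result, so no separate value-iteration or interchange argument is needed.

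First I would make the stopping sets explicit. By the OSLA derivation leading to Inequality (\ref{cond11}), Player 2 stops in state $(n,p_{m,n,H})$ precisely when $p_{m,n,H}\geq q_n$, so $B_n = \{(n,p): p\geq q_n\}$. The criterion then reduces to a purely scalar statement: assuming $p_{m,n,H}\geq q_n$, I must show that the next state $(k,p_{m+1,k,H})$, with $k>n$, satisfies $p_{m+1,k,H}\geq q_k$ for every realization of $k$, which is what delivers the required probability-one conclusion.

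The core step is a short chain of inequalities combining the two monotonicity results proved above. Proposition \ref{prop1}(a) supplies $p_{m+1,k,H}>p_{m,n,H}$, i.e. the posterior probability that the accepted candidate has relative rank $1$ strictly increases at each new decision point. Proposition \ref{prop2} supplies $q_k\leq q_n$ for $k>n$, i.e. the acceptance threshold is non-increasing in the number of objects observed. Combining these with the standing assumption $p_{m,n,H}\geq q_n$ gives
\[
p_{m+1,k,H} > p_{m,n,H} \geq q_n \geq q_k ,
\]
so $(k,p_{m+1,k,H})\in B_k$ with probability $1$, which is exactly what the criterion requires.

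The point worth stressing is that the lemma itself is essentially this one line of algebra; all of the substantive work has been displaced into Propositions \ref{prop1} and \ref{prop2}. I therefore expect the genuine obstacle to lie not in the lemma but in Proposition \ref{prop2}, whose proof hinges on comparing the partial sums $S_{1,n}$ and $S_{2,n}$ and, crucially, on the fact that $S_{1,n}<1$ for $n>n^{\ast}$ (the same inequality that pins down $n^{\ast}$ in the secretary problem). By contrast, the strict increase of $p_{m+1,k,H}$ in Proposition \ref{prop1}(a) is immediate once the inductive formula (\ref{induc1}) is in hand, since the difference $p_{m+1,k,H}-p_{m,n,H}$ has a manifestly positive numerator whenever $k\geq n+1$. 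The only residual care needed is at the boundary $n=n^{\ast}$, where one checks that the scalar chain still applies so that optimality extends from $n>n^{\ast}$ to $n\geq n^{\ast}$ as claimed.
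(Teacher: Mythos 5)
Your proposal is correct and follows essentially the same route as the paper: the paper's own argument for this lemma is precisely the one-line verification of the Chow \emph{et al.} (1971) monotone-case condition via the chain $p_{m+1,k,H}>p_{m,n,H}\geq q_n\geq q_k$, with the substantive content carried by Proposition \ref{prop1}(a) and Proposition \ref{prop2}. Your added remarks about where the real work lies and about the boundary at $n=n^{\ast}$ are accurate but do not change the argument.
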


It follows that the optimal strategy for $n> n^{\ast}$ is defined by a set of decreasing thresholds $q_n$, such that Player 2 should accept a candidate at moment $n$ when 
$p_{m,n,H}\geq q_n$. Since for a given realisation of the sequence of offers the appropriate $p_{m,n,H}$ can be calculated iteratively, it is relatively simple to implement the optimal 
strategy based on this criterion. Calculation of the optimal expected reward is more complex.

\textbf{Note}: For $n<n^{\ast}$ it is expected that the sufficient condition for an OSLA rule to be optimal given above will not be satisfied. For example, it was argued that it is optimal for Player 2 to accept a candidate just before moment $n^{\ast}$, but reject such a candidate just after moment $n^{\ast}$. For $n<n^{\ast}$ the optimal strategy of Player 2 will be derived using the fact that his expected reward from future search 
is decreasing in $n$ (see Example 1).

For $p_{m,n,H}<q_n$, we must consider two possibilities. Firstly, if the next candidate 
appears at moment $k$, where $k<k_0$ and $k_0$ is the smallest natural number $k$ 
satisfying the inequality
\[
p_{m+1,k,H} = \frac{p_{m,n,H}(k-2)+(1-p_{m,n,H})(n-1)}{p_{m,n,H}(k-2)+2(1-p_{m,n,H})(n-1)} \geq q_k,
\]
then Player 2 has an expected reward of $v(k,p_{m+1,k,H})$. Otherwise, Player 2 will 
accept the next candidate to appear. Considering the arrival time of the next candidate, it follows that
\begin{eqnarray}
v(n,p_{m,n,H}) & = & \left\{ \sum_{k=n+1}^{k_0-1} \left[ 
\frac{np_{m,n,H}}{k(k-1)} + \frac{2n(n-1)(1-p_{m,n,H})}{k(k-1)(k-2)} \right] 
v(k,p_{m+1,k,H}) \right\} + \nonumber \\
& & + \sum_{k=k_0}^N \left[ \frac{np_{m,n,H}}{N(k-1)} + \frac{(1-p_{m,n,H})n(n-1)}
{N(k-1)(k-2)} \right] . \label{opt5p2} 
\end{eqnarray}
For $n<n^{\ast}$, $v(n,-1)$ is non-increasing in $n$, since for $n<k<n^{\ast}$, Player 2 
can ensure an expected payoff of $v(k,-1)$ by rejecting objects $n+1, n+2, \ldots ,k$ and 
then using the optimal response. The expected payoff from stopping at moment 
$n$ in this case is $n/N$, which is increasing in $n$. Hence, the optimal response for 
$n<n^{\ast}$ is of the form "accept a candidate iff $n\geq n_0$". Suppose $n_0 -1 \leq n < n^{\ast}-1$. If the next candidate appears before time $n^{\ast}$, then Player 2 accepts such a candidate. Otherwise, Player 2 must wait until Player 1 accepts the next candidate to 
appear before reaching a decision point. The probability that no such candidate appears by 
time $n^{\ast}-1$ is $\frac{n}{n^{\ast} -1}$.
Hence, considering the time at which the next candidate appears
\begin{equation}
v(n,-1) = \frac{n}{N} \left[ \sum_{k=n+1}^{n^{\ast}-1} \frac{1}{k-1} \right] 
+ \frac{n}{n^{\ast}-1} v(n^{\ast} -1,-1). \label{opt6p2} 
\end{equation}
The optimality condition states that such a candidate should be accepted if and only if 
$n/N \geq v(n, -1)$. Hence, $n_0$ is the smallest value of $n$ that satisfies the 
inequality
\begin{equation}
1 \geq \frac{Nv(n^{\ast}-1,-1)}{n^{\ast}-1} + \sum_{k=n+1}^{n^{\ast}-1} \frac{1}{k-1}.
\label{opt7p2} 
\end{equation}
For $0\leq n<n_0$, $v(n,-1) = v(n_0-1,-1)$. 

\begin{example} \label{ex1}
This example shows how these equations are used in the case $N=50$. The thresholds $q_n$ are interpreted in such a way as to present the optimal policy in a form based simply 
on the arrival times of the candidates. First we consider the equilibrium strategy of Player 1. 
We have
\[
\sum_{k=19}^{50} \frac{1}{k-1} >1 \mbox{ and } \sum_{k=20}^{50} \frac{1}{k-1} <1.
\]
Hence, $n^{\ast}=19$ and the equilibrium strategy of Player 1 is to reject the first 18 
objects and accept the first candidate to appear afterwards. The equilibrium payoff of 
Player 1 is given by 
\[
z_{18} = \frac{18}{50} \sum_{k=19}^{50} \frac{1}{k-1} \approx 0.374275.
\]

Now we derive the optimal response of Player 2 for $n> n^{\ast}$.
The optimality criterion at moment $n$ is to accept a candidate in state $(n,p_{m,n,H})$ if and only if 
$p_{m,n,H}\geq q_n$, where $q_n$ is given by Equation (\ref{cond11}). 
Note that 
$p_{1,n,H}=1/2$ for any $n\geq n^{\ast}+1$, $p_{m,n,H}\geq \frac{m}{m+1}$ for 
$n\geq n^{\ast}+m$ and $q_n$ is decreasing in $n$. Hence, if $n_1$ is the smallest 
integer value of $n$ that satisfies the inequality $q_n \leq 0.5$, then any
candidate appearing to Player 2 at moment $n_1$ or later should be accepted. 
The values of $q_n$ relevant to the derivation of the optimal strategy are given in Table
\ref{q} to four decimal places.

\begin{table}[h]
\caption{Threshold values, $q_n$, defining the optimal response of Player 2 for $N=50$.}\label{q}%
\begin{tabular}{@{}lllllllllll@{}}
\toprule
$n$ & 20 & 21 & 22 & 23 & 24 & 25 & 26 & 27 & 28 & 29 \\
\midrule
$q_n$ & 0.8993 & 0.8331 & 0.7747 & 0.7225 & 0.6753 & 0.6323 & 0.5926 & 0.5558 & 
0.5213 & 0.4889 \\
\botrule
\end{tabular}
\footnotetext{Source: Author's calculations}
\end{table}

It follows that any candidate appearing to Player 2 at or after moment 29 should be accepted. 
Since $p_{2,n,H}\geq 2/3$, it follows that the second candidate to appear to Player 2 after 
moment $n^{\ast}=19$ should always be accepted at or after moment 25. It remains to state whether such a candidate should be accepted at an earlier moment. The 
probability that such a candidate has relative rank equal to one depends on the moments at which the first two candidates to appear to Player 2 after moment $n^{\ast}=19$ are observed. The relevant
probabilities are given in Table \ref{p} to four decimal places. Comparing these values with the relevant threshold for $\mu_2$, it can be seen that if the second candidate to be seen by 
Player 2 after moment 19 appears at moment 24, then it should only be accepted when the 
first candidate appeared at moment 20 or 21.

\begin{table}[h]
\caption{Probabilities that the candidate appearing to Player 2 at moment 
$\mu_2$ has relative rank 1 conditional on $\mu_1$ and $\mu_2$ $(N=50)$.}\label{p}%
\begin{tabular}{@{}lllll@{}}
\toprule
& $\mu_2 =21$ & $\mu_2=22$ & $\mu_2=23$ & $\mu_2=24$ \\
\midrule
$\mu_1 =20$ & 0.6667 & 0.6724 & 0.6780 & 0.6833 \\
$\mu_1 =21$ & - & 0.6667 & 0.6721 & 0.6774 \\
$\mu_1 =22$ & - & - & 0.6667 & 0.6719 \\
$\mu_1 =23$ & - & - & - & 0.6667 \\
\botrule
\end{tabular}
\footnotetext{Source: Author's calculations}
\end{table}

Since $p_{3,n,H}\geq 3/4$, it follows that the third candidate to appear to Player 2 after 
moment $n^{\ast}=19$ should always be accepted at or after moment 23. Note that such a 
candidate cannot appear earlier than at moment 22 and in this case $p_{3,22,H}=3/4$. Hence, comparison with $q_{22}$ indicates that such a candidate should not be accepted at moment 22. Finally, since $p_{4,n,H}\geq 4/5$, it follows that the fourth candidate to appear to Player 2 after 
moment $n^{\ast}=19$ should always be accepted, since it cannot appear before moment 23. 

To summarise, when $n>n^{\ast}=19$ Player 2 should:
\begin{description}
\item[i)] accept the first candidate seen after moment 19, if and only if $n\geq 29$,
\item[ii)] accept the second candidate seen after moment 19, if $n\geq 25$ or 
$n=24$ and the first candidate seen after moment 19 was seen before moment 22,
\item[iii)] accept the third candidate seen after moment 19, if and only if $n\geq 23$,
\item[iv)] always accept the fourth candidate seen after moment 19.
\end{description}

We now derive the expected future reward starting in state 
$(n,p_{m,n,H})$, where $n^{\ast}< n\leq 50$
and $p_{m,n,H}\in [0,1]$. From the nature of the problem, $p_{m,n,H}$ can only take 
a finite set of values in the interval $[0,1]$ while an optimally reacting Player 2 is still searching. Based on the optimal strategy and the possible appearance times of candidates 
$p_{m,n,H} \in \{ 0, 1/2, 2/3, p_{A}\approx 0.6719, p_{B}\approx 0.6721, p_{C} \approx 0.6724, p_{D}\approx 0.6780, 3/4\}$. Note that $p_{m,n,H}$ can only attain a value of $\geq p_A$ while an optimally acting Player 2 is still searching when $n\geq 22$. Starting from the state 
$(n,p_{m,n,H})$, where $p_{m,n,H}\geq p_A$ and $n\geq 22$, it is optimal for Player 2 to accept the next candidate to appear. From Equation (\ref{opt4p2}), it thus follows that for 
$p\geq p_A$ and $n\geq 22$
\begin{equation}
v(n,p) = \frac{np}{50} \left[ \sum_{k=n+1}^{50}  \frac{1}{k - 1} 
\right]  + \frac{(1 - p)n(n - 1)}{50} \left[ \sum_{k=n+1}^{50} \frac{1}{(k - 1)(k - 2)} 
 \right] . \label{exam1}
\end{equation}
Hence, for $p_{m,n,H} \in \{ p_A, p_B, p_C, p_D, 3/4\}$, the relevant values of the future expected reward can be calculated directly.

Note that when $p_{m,n,H}=2/3$ and $n\geq 22$, Player 2 should accept the next candidate to appear. Hence, the appropriate values of the future expected reward can be 
calculated directly by substituting $p=2/3$ into Equation (\ref{exam1}). Since $p$ can only be equal to 2/3 for $n\geq 21$, it remains to calculate $v(21,2/3)$. Starting from the state $(21, 2/3)$, the next candidate to appear should be accepted from moment 23 onwards. Given that a candidate appears at moment 22, then the probability that it has relative rank 1 is equal to 3/4. Hence, from Equation (\ref{opt5p2})
\begin{equation}
v(21, 2/3) = \frac{2v(22,3/4)}{33} + \frac{21}{75} \left[ \sum_{k=23}^{50} \frac{1}{k-1} \right] +\frac{7}{5} 
\left[ \sum_{k=23}^{50} \frac{1}{(k-1)(k-2)} \right] . 
\label{exam2}
\end{equation}
Since $v(22,3/4)$ has already been derived, it follows that the relevant values of the future expected reward function can be calculated when $p_{m,n,H}=2/3$.

Similarly, when $p_{m,n,H}=1/2$ and $n\geq 24$, Player 2 should accept the next candidate to appear. Hence, the appropriate values of the future expected reward can be 
calculated directly by substituting $p=1/2$ into Equation (\ref{exam1}). 

For $n\in \{ 20,21\}$, starting from the state $(n, 1/2)$, it follows that the next candidate to appear to Player 2 should be accepted if and only if it appears at moment 24 or later. Hence, 
 from Equation (\ref{opt5p2})
 \begin{eqnarray}
v(n,1/2) & = & \frac{n}{2}\left[ \sum_{k=n+1}^{23} \frac{v(k,p_{2,k,H})}{k(k-1)} \right] 
+ n(n-1)\left[   \sum_{k=n+1}^{23} \frac{v(k,p_{2,k,H})}{k(k-1)(k-2)} \right] +\nonumber \\
& & + \frac{n}{100}\left[ \sum_{k=24}^{50} \frac{1}{k-1} \right] + 
\frac{n(n-1)}{100} \left[ \sum_{k=24}^{50} \frac{1}{(k-1)(k-2)} \right] \label{exam3} . 
\end{eqnarray}
It should be noted that since $p_{2,k,H}\in \{ 2/3, p_B\approx 0.6721, p_C\approx 0.6724,p_D \approx 0.6780\}$, 
the values $v(k,p_{2,k,H})$ for $k\in \{21,22,23 \}$ have already been derived. 

For $n\in \{ 22,23\}$, starting from the state $(n, 1/2)$, it follows that the next candidate to appear to Player 2 should be accepted if and only if it appears at moment 25 or later. Hence, 
 from Equation (\ref{opt5p2})
 \begin{eqnarray}
v(n,1/2) & = & \frac{n}{2}\left[ \sum_{k=n+1}^{24} \frac{v(k,p_{2,k,H})}{k(k-1)} \right] 
+ n(n-1)\left[   \sum_{k=n+1}^{24} \frac{v(k,p_{2,k,H})}{k(k-1)(k-2)} \right] +\nonumber \\
& & + \frac{n}{100}\left[ \sum_{k=25}^{50} \frac{1}{k-1} \right] + 
\frac{n(n-1)}{100} \left[ \sum_{k=25}^{50} \frac{1}{(k-1)(k-2)} \right] \label{exam4} . 
\end{eqnarray}
 Note that since $p_{2,k,H}\in \{ 2/3, p_A\approx 0.6719 \}$, 
the values $v(k,p_{2,k,H})$ for $k\in \{23,24 \}$ have already been derived. 
It follows that all the relevant values $v(n,1/2)$ can be derived.

Starting from state $(n,0)$ for $n\geq 28$, Player 2 should accept the next candidate to appear. Hence, the appropriate values of the future expected reward can be 
calculated directly by substituting $p=0$ into Equation (\ref{exam1}). 

For $19\leq n\leq 27$, by considering the moment at which the first candidate appears to Player 2 after moment $n^{\ast}=19$, from Equation (\ref{opt5p2})  
\begin{equation}
v(n,0) = 2n(n-1)\left[ \sum_{k=n+1}^{28} \frac{v(k,1/2)}{k(k-1)(k-2)}\right] + \frac{n(n-1)}{50} \left[ \sum_{k=29}^{50} \frac{1}{(k - 1)(k-2)} 
\right] .\label{exam5}
\end{equation}
Since the relevant values of $v(k,1/2)$ have already been derived, it follows that all the 
values $v(n,0)$ can be derived for $n\geq 19$.

Considering the distribution of the time at which Player 1 accepts a candidate when still searching at moment $n$, where $n\geq 18$, from 
Equation (\ref{opt2p2})
\begin{equation}
v(n,-1) = \sum_{k=n+1}^{50} \frac{nv(k,0)}{k(k-1)} \label{exam6} .
\end{equation}
Since the relevant values of $v(k,0)$ have already been derived, the values of the function $v(n,-1)$ for $n\geq 18$ can thus be calculated. It can be shown 
by direct calculation that $v(18,-1)\approx 0.145870$.   

Now we consider the period when Player 2 should preempt Player 1 by accepting a candidate, i.e. $n_0 -1 \leq n\leq n^{\ast}-1=18$. From Equation (\ref{opt6p2}),
\begin{equation}
v(n,-1) = \frac{n}{N} \left[ \sum_{k=n+1}^{18} \frac{1}{k-1} \right] 
+ \frac{n}{18} v(18,-1). \label{exam7}
\end{equation} 
 From Condition (\ref{opt7p2}) Player 2 should accept a candidate appearing at moment $n$, $n\leq 18$, if and only if 
\[
\frac{25v(18,-1)}{9} + \sum_{k=n+1}^{18} \frac{1}{k-1} \leq 1.
\]
It follows that $n_0 =11$. Hence, from Equation (\ref{opt6p2}) the value of the game to Player 2 is given by 
\[
v(0,-1) = v(10,-1) = \frac{5v(18,-1)}{9} + \frac{1}{5} \sum_{k=11}^{18} \frac{1}{k-1} 
\approx 0.203157.
\]
To summarise, the optimal response of Player 2 is of the following form:
\begin{description}
\item[i)] accept a candidate if $11\leq n\leq 18$, 
\item[ii)] accept the first candidate seen after moment 19, if and only if $n\geq 29$,
\item[iii)] accept the second candidate seen after moment 19, if $n\geq 25$ or 
$n=24$ and the first candidate seen after moment 19 was seen before moment 22,
\item[iv)] accept the third candidate seen after moment 19, if and only if $n\geq 23$,
\item[v)] always accept the fourth candidate seen after moment 19.
\end{description}
\end{example} 

Such a definition of the optimal strategy will be described as \textit{a priori}, since it gives precise instructions to a DM on how to behave before any objects in the sequence have been observed. Intuitively, as $N$ increases, it becomes more difficult to make such a
description. However, it is always possible to implement the optimal strategy by 
updating the probability that a candidate according to Player 2 has relative rank 1 every time that such a candidate appears, i.e. using an online procedure.

Additionally, the solution to this game was found for $3\leq N\leq 10$ and $N=20$. 
Player 1 should accept a candidate if and only if $n\geq n^{\ast}$. The value of the game to 
Player $i$ is given by $u_i$.
For $N=3,4$, Player 2 should always accept the first object. 
For $N\geq 5$, the best response of Player 2 can be described by two thresholds, $n_0$ and $n_1$. Player 2 should accept a candidate when $n_0 \leq n<n^{\ast}$ and should accept the first candidate seen after moment $n^{\ast}$ if and only if $n\geq n_1$. The second 
candidate to appear after moment $n^{\ast}$ should always be accepted. These equilibrium 
strategies are described in Table (\ref{tab2c}).

\begin{table}[h]
\caption{Equilibrium strategies and values}\label{tab2c}%
\begin{tabular}{@{}llllll@{}}
\toprule
$n$ & $n^{\ast}$ & $u_1$ & $n_0$ & $n_1$ & $u_2$ \\
\midrule
3 & 2 & 0.5000 & 1 & - & 0.3333 \\
4 & 2 & 0.4583 & 1 & - & 0.2500 \\
5 & 3 & 0.4333 & 2 & 4 & 0.2667 \\
6 & 3 & 0.4278 & 2 & 4 & 0.2472 \\ 
7 & 3 & 0.4143 & 2 & 5 & 0.2337 \\
8 & 4 & 0.4098 & 2 & 5 & 0.2348 \\
9 & 4 & 0.4060 & 2 & 6 & 0.2199 \\
10 & 4 & 0.3987 & 3 & 6 & 0.2153 \\
20 & 8 & 0.3842 & 5 & 12 & 0.2095 \\
\botrule
\end{tabular}
\footnotetext{Source: Author's calculations}
\end{table}

It is interesting to note that the value of the game to Player 2 is not decreasing in $N$.
Increases in this value tend to coincide with increases in $n^{\ast}$, e.g. when 
$N=5$ and $N=8$. This is due to the fact that it becomes more likely that a Player 2 
accepts a candidate just before moment $n^{\ast}$. This is an advantageous situation for 
Player 2.

\section{A Near-Optimal Solution} \label{sec4}

Even when $N$ is large, Player 2 is very likely to only see a small number of candidates 
after moment $n^{\ast}$ before choosing an object. The probability of the first candidate to appear to Player 2 after 
moment $n^{\ast}$ having relative rank one is always 1/2. If Player 2 observes several such candidates without accepting one, these candidates will appear in close succession. Hence, 
$p_{m,T_m ,H}\approx \frac{m}{m+1}$ for $m=2,3,\ldots$. Hence, a good approximation to the optimal rule can be obtained by setting $p_{m,T_m ,H}=\frac{m}{m+1}$. 
It follows that the state space at decision points where $n\geq n^{\ast}$ can be simplified to $\{(n,m)\}_{n\geq n^{\ast},m=-1,0,1,\ldots ,n-n^{\ast}} $ where  $m=-1$
given that Player 1 has not yet accepted a candidate, otherwise $m$ is the number of candidates seen by Player 2 after moment $n^{\ast}$. Let 
$v^a (n,m)$ be the approximation to the optimal expected reward of Player 2 
after rejecting at moment $n$ the $m$-th candidate to be seen by Player 2 after moment $n^{\ast}$, $m=1,2,\ldots ,N-n^{\ast}$, $n\geq n^{\ast}+m$. The value functions $v^a (n,0)$ and $v^a (n,-1)$ are analogously defined as approximations to the functions $v (n,0)$ and 
$v(n,-1)$, respectively.

For $m\geq 0$, the estimated expected payoff from stopping at the next decision point, 
$w^a (n,m)$ is given by
\begin{equation} 
w^a (n,m)  = \frac{mn}{N(m+1)} \left[ \sum_{k=n+1}^N \frac{1}{k-1} \right] 
+ \frac{n(n-1)}{N(m+1)} \left[ \sum_{k=n+1}^N \frac{1}{(k-1)(k-2)} \right] . \label{app1}
\end{equation}
At a decision point in state $(n,m)$, the estimated reward from accepting the current candidate is $\frac{nm}{N(m+1)}$. Hence, using the near optimal strategy, the current candidate should be accepted if and only if $\frac{nm}{N(m+1)} \geq w^a (n,m)$. 
This leads to the condition
\begin{equation} 
m \geq \lceil S_{2,n}/(1-S_{1,n}) \rceil = K(n), \label{thresh1}
\end{equation} 
where $\lceil x\rceil$ denotes the smallest integer not less than $x$ and $S_{1,n}$ and 
$S_{2,n}$ are given by Equation (\ref{cond11}). Since both $S_{2,n}$ and 
$S_{1,n}$ are decreasing in $n$ and $S_{1,n}<1$ (see the proof of Proposition \ref{prop2}), it follows that 
$K(n)$ is non-increasing in $n$. Thus the near optimal strategy based on this approximation is 
of the following form:
\begin{description}
\item[1.] For $n>n^{\ast}$ the optimal response of Player 2 is to accept the 
$m$-th candidate seen after moment $n^{\ast}$ if and only if $n\geq n_m$, where $n_m$ is 
the smallest $n$ that satisfies $K(n)\leq m$.
\item[2.] The thresholds $n_m$ are non-increasing in $m$. 
\item[3.] For $n\geq n_1$, any candidate should be accepted.  
\item[4.] For finite $N$, there exists a finite $m_0$ such that the $m_0$-th candidate observed by Player 2 after moment $n^{\ast}$ is always accepted. The value of $m_0$ is 
given by the smallest solution $m$ of the inequality $m\geq  \lceil S_{2,n^{\ast}+m}/(1-S_{1,n^{\ast}+m}) \rceil$.
\end{description}
The condition given in Point 4 above states that if $m$ candidates appear in succession at 
moments $n^{\ast}+1,n^{\ast}+2,\ldots $, then Player 2 should reject the first $m_0-1$ of these candidates and accept the $m_0$-th candidate. It follows that $m_0$ can be derived 
without any knowledge of the value functions. 

From the arguments made above, to approximate the expected reward of Player 2 under the corresponding near optimal strategy, it suffices to derive the estimates of the future expected reward starting in the following states: $(n,-1)$ for $0\leq  n\leq N$, and $(n,m)$ for $0\leq m\leq m_0 -1$ and $n^{\ast}+m\leq n\leq N$. Starting from state $(n,m_0-1)$, Player 2 should always accept the next candidate to appear. Hence, $v^a (n,m_0-1)=w^a (n,m_0-1)$, where $w^a (n,m_0-1)$ is given by 
Equation (\ref{app1}). Hence, the required values of the function $v^a (n,m_0 -1)$ can be 
calculated directly. The required values of the function $v^a (n,m_0 -2)$ can then be calculated. For $n\geq n_{m_0-1}-1$, these values can be calculated directly using Equation (\ref{app1}). For $n^{\ast}+m_0-2 \leq n<n_{m_0-1}-1$, these values can be derived by 
considering the distribution of the moment at which the next candidate appears. In general, starting from state $(n,m)$, where $m\geq 0$, this leads to
\begin{eqnarray}
v^a (n,m) & = & \left\{ \sum_{k=n+1}^{n_{m+1}-1} \left[ \frac{mn}{(m+1)k(k-1)}
+ \frac{2n(n-1)}{(m+1)k(k-1)(k-2)} \right] v^a (k,m+1) \right\} + \nonumber \\
&  & + \sum_{k=n_{m+1}}^N \left[ \frac{mn}{(m+1)N(k-1)} + \frac{n(n-1)}{(m+1)N(j-1)(j-2)}\right] . \label{app2}
\end{eqnarray}
It follows that $v^a (n, m_0-2)$ can be calculated for all $n\geq n^{\ast}+m_0-2$, since 
the relevant values of $v^a (k,m_0-1)$ have already been derived. 
Arguing in a similar way, the required values of the function $v^a$ can be derived 
recursively for $m=m_0-3, m_0-4, \ldots ,0$. Finally, 
the function $v^a (n,-1)$ can be calculated using Equation (\ref{opt2p2}) for $n\geq n^{\ast}-1$ or 
(\ref{opt3p2}) for $n<n^{\ast}-1$ and replacing $v(n,j)$ by $v^a (n,j)$ for $j=0,-1$.
These calcuations are illustrated in Example \ref{ex2}.

 The proof of the following theorem is given in the appendix. 

\begin{theorem} \label{thma}
The exact value of the value function, $v(n,p)$ is non-decreasing in $p$ for 
$n>n^{\ast}$ and $p\geq 0$.  In addition, $v(n,-1)<v(n,0)$ for $n^{\ast}<n<N$.
\end{theorem}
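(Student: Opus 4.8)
My plan is to argue by backward induction on $n$, proving the monotonicity in $p$ first and then deducing the strict comparison of $v(n,-1)$ and $v(n,0)$ from it. The device that drives everything is to condition on the \emph{unobserved} true relative rank $r\in\{1,2\}$ of the candidate occupying state $(n,p)$. Fix any stopping rule $\sigma$ that is measurable with respect to Player~2's observable history (the moments at which candidates appear), and let $V_r(n;\sigma)$ be his expected reward from continuing after rejecting at moment $n$, given that the current candidate has true rank $r$; crucially, at every later decision point $\sigma$ sees only arrival moments, so it takes the \emph{same} stop/continue action in the rank-$1$ and rank-$2$ worlds. Since the belief $p$ enters only as the weight of the mixture of these two conditionally defined processes,
\[
v(n,p)=\max_{\sigma}\bigl[p\,V_1(n;\sigma)+(1-p)\,V_2(n;\sigma)\bigr],
\]
so $v(n,\cdot)$ is a maximum of affine functions. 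If $\sigma^{\ast}$ is optimal at $p_1$ and $p_2>p_1$, comparing this expression at $p_2$ (using $\sigma^{\ast}$ as a feasible rule) with its value at $p_1$ gives $v(n,p_2)-v(n,p_1)\ge(p_2-p_1)\bigl[V_1(n;\sigma^{\ast})-V_2(n;\sigma^{\ast})\bigr]$. The first assertion therefore reduces to the single inequality $V_1(n;\sigma)\ge V_2(n;\sigma)$.

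To attack this, write $G_r(k)$ for the payoff attached to an object of true rank $r$ reached at the next decision point $k$: $G_1(k)=k/N$ or the rank-$1$ continuation value $V_1(k;\sigma)$, and $G_2(k)=0$ or $V_2(k;\sigma)$, according to the common action dictated by $\sigma$. Using the transition laws (\ref{tm1}) and (\ref{tm7}) one obtains $V_1(n;\sigma)=\sum_{k}\frac{n}{k(k-1)}G_1(k)$ and $V_2(n;\sigma)=\sum_{k}\frac{n(n-1)}{k(k-1)(k-2)}\bigl[G_1(k)+G_2(k)\bigr]$, hence
\[
V_1(n;\sigma)-V_2(n;\sigma)=\sum_{k=n+1}^{N}\frac{n}{k(k-1)(k-2)}\bigl[(k-n-1)G_1(k)-(n-1)G_2(k)\bigr].
\]
The induction supplies $G_1(k)\ge G_2(k)\ge 0$ ($k/N\ge 0$ when $\sigma$ stops, and $V_1(k;\sigma)\ge V_2(k;\sigma)$ by the induction hypothesis at $k>n$ when it continues). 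This is the main obstacle: the rank-$2$ kernel (\ref{tm7}) front-loads the next arrival relative to the rank-$1$ kernel (\ref{tm1}), so the bracketed coefficient is negative precisely for $k<2n$ and term-by-term positivity fails; the bound $G_1\ge G_2$ alone does not suffice. I would close the gap by strengthening the inductive hypothesis so as to control $G_1(k)-G_2(k)$ from below by the stopping advantage (at least $k/N$ when $\sigma$ stops, and $V_1(k;\sigma)-V_2(k;\sigma)$ otherwise) and then offset the negative $k<2n$ terms against the dominant positive $k>2n$ terms. Equivalently, one builds a monotone coupling of the two candidate streams in which each rank-$2$ arrival is matched to a rank-$1$ arrival reached no later and worth at least as much, so that under the common rule $\sigma$ the rank-$2$ world cannot outperform the rank-$1$ world.

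For the second assertion I would use the monotonicity just established. By (\ref{opt2p2}), $v(n,-1)=\sum_{k=n+1}^{N}\frac{n}{k(k-1)}\,v(k,0)$, while from state $(n,0)$ the first candidate arrives with law (\ref{tm7}) carrying belief $1/2$, so $v(n,0)=\sum_{k=n+1}^{N}\frac{2n(n-1)}{k(k-1)(k-2)}\max\{\tfrac{k}{2N},v(k,1/2)\}$, and the first part yields $v(k,1/2)\ge v(k,0)$. The cleanest finish is a coupling on the common future permutation: the states $(n,-1)$ and $(n,0)$ differ only in that, in $(n,-1)$, Player~1 is still active and preempts the first future object of relative rank $1$ — exactly the sort of object that can be the overall best — whereas in $(n,0)$ Player~1 is gone and Player~2 additionally enjoys a lower reference level (his best seen is the second best so far). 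Both effects weakly favour $(n,0)$, giving $v(n,0)\ge v(n,-1)$; strictness for $n^{\ast}<n<N$ follows from the positive-probability event that the object preempted by Player~1 in the $(n,-1)$ world is the overall best, since Player~2 can secure that object in the $(n,0)$ world and thereby does strictly better.
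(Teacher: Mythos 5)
Your reduction of the monotonicity claim to the inequality $V_1(n;\sigma)\ge V_2(n;\sigma)$ is where the argument breaks, and the gap you yourself flag as ``the main obstacle'' cannot be closed along either of the routes you sketch, because that inequality is false for general arrival-time-measurable $\sigma$. Take (in the continuous-time limit, where it is easiest to see) the rule ``reject the next candidate, accept the one after'', started at $t=e^{-1}$. Conditional on rank $1$, the next candidate arrives with density $t/s^{2}$ and the continuation value is $z(s)=-s\ln s$, so $V_1=\int_t^1 t s^{-2}(-s\ln s)\,ds=\tfrac12\,t\ln^2 t=\tfrac12 e^{-1}\approx 0.184$; conditional on rank $2$, the next candidate arrives with density $2t^2/s^{3}$ and the continuation value is $\tfrac12\bigl(z(s)+y(s)\bigr)$, which integrates to $V_2=-t(1-t)\ln t=e^{-1}(1-e^{-1})\approx 0.233>V_1$. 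Hence no pathwise coupling ``under which the rank-$2$ world cannot outperform the rank-$1$ world under a common rule'' can exist; note also that the rank-$2$ arrival law is stochastically \emph{earlier} than the rank-$1$ law, so the matching you describe (rank-$2$ arrival matched to a rank-$1$ arrival reached no later) points the wrong way. Restricting attention to $\sigma^{\ast}$ optimal at $p_1$ does not rescue the envelope argument either: $V_1(\sigma^{\ast})-V_2(\sigma^{\ast})$ is exactly a subgradient of the convex function $v(n,\cdot)$ at $p_1$, so asserting its non-negativity for every $p_1$ is a restatement of the theorem, and you supply no independent handle on it. The ``strengthened induction'' is left unexecuted, and since the target inequality fails precisely for reject-then-accept rules of the kind that are optimal near $p=0$, any repair must exploit optimality of $\sigma$ in an essential way that your framework does not provide.

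The paper avoids this trap entirely: it never compares the rank-$1$ and rank-$2$ conditional worlds under a common rule, but instead couples the two \emph{belief} processes started from $(n,p)$ and $(n,r)$ with $p>r$. It equalizes the laws of the next decision epoch by inserting ``artificial'' decision points into the $(n,p)$ process with probability $t(r;n,k)-t(p;n,k)$ (non-negative for $k\le 2n$, which suffices because any candidate at $k>2n\ge 2n^{\ast}$ is accepted outright), shows that the posterior at each real or artificial decision point is larger under $p$ than under $r$, and runs an induction on the number of candidates still to be rejected from $(n,r)$. Your treatment of the second assertion is fine in spirit and close to the paper's one-step recursion in $n$, but it relies on part one, so the proof as a whole does not stand.
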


The following lemma results from the fact that the near optimal rule based on the number of 
candidates appearing after moment $n^{\ast}$ uses a lower bound approximating 
the probability that the $m$-th such candidate has relative rank 1.
 
\begin{lemma}[to Theorem \ref{thma}] \label{lemma2}
The value function $v^a (n,m)$ gives a lower bound on the corresponding optimal expected future reward 
$v(n,p_{m,n,H})$.
\end{lemma}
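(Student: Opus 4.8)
The plan is to prove the bound $v^a(n,m)\le v(n,p_{m,n,H})$ by backward induction on the moment $n$, comparing the near-optimal recursion (\ref{app2}) with a feasible-strategy lower bound for the true value. Write $p=p_{m,n,H}$ and $p'=\frac{m}{m+1}$, and recall from Proposition \ref{prop1}(d) that $p\ge p'$. The crucial structural point is that the counting strategy underlying $v^a$ (accept the $(m+1)$-th candidate iff it appears at moment $\ge n_{m+1}$) is a \emph{feasible} strategy in the true problem; evaluating the strategy that uses this threshold and then reverts to the optimal response gives, directly from (\ref{opt5p2}),
\[
v(n,p)\ \ge\ \sum_{k=n+1}^{n_{m+1}-1} a_k(p)\,v(k,p_{m+1,k,H})\ +\ \sum_{k=n_{m+1}}^{N} b_k(p),
\]
where $a_k(p)=\frac{np}{k(k-1)}+\frac{2n(n-1)(1-p)}{k(k-1)(k-2)}$ is the arrival weight of (\ref{totprob}) and $b_k(p)=\frac{np}{N(k-1)}+\frac{n(n-1)(1-p)}{N(k-1)(k-2)}$ is the accept-reward weight. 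By definition, $v^a(n,m)=\sum_{k=n+1}^{n_{m+1}-1} a_k(p')\,v^a(k,m+1)+\sum_{k=n_{m+1}}^{N} b_k(p')$.

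The base case $n=N$ is immediate since both value functions vanish. For the inductive step I would substitute the inductive hypothesis $v^a(k,m+1)\le v(k,p_{m+1,k,H})$ (valid for $k>n$) into the expression for $v^a(n,m)$; since the weights $a_k(p')$ are nonnegative this replaces each $v^a(k,m+1)$ by the larger $v(k,p_{m+1,k,H})$. It then remains to absorb the effect of replacing $p$ by $p'$ in the weights, i.e. to establish
\[
\sum_{k=n+1}^{n_{m+1}-1}\bigl[a_k(p')-a_k(p)\bigr]v(k,p_{m+1,k,H})\ \le\ \sum_{k=n_{m+1}}^{N}\bigl[b_k(p)-b_k(p')\bigr].
\]
A short computation gives $a_k(p')-a_k(p)=\frac{n(p-p')(2n-k)}{k(k-1)(k-2)}$ and $b_k(p)-b_k(p')=\frac{n(p-p')(k-n-1)}{N(k-1)(k-2)}$, so (dividing by $n(p-p')\ge 0$, the case $p=p'$ being trivial) the whole induction reduces to the single inequality
\[
(\star)\qquad \sum_{k=n+1}^{n_{m+1}-1}\frac{2n-k}{k(k-1)(k-2)}\,v(k,p_{m+1,k,H})\ \le\ \sum_{k=n_{m+1}}^{N}\frac{k-n-1}{N(k-1)(k-2)}.
\]
To close $(\star)$ I would bound the continuation values from above, and this is exactly where Theorem \ref{thma} enters: since $v(k,\cdot)$ is non-decreasing in $p$ we have $v(k,p_{m+1,k,H})\le v(k,1)$, and because a rejected relative-rank-$1$ candidate leaves (for $k>n^{\ast}$) the ordinary secretary continuation, $v(k,1)=z_k=\frac{k}{N}\sum_{j=k+1}^{N}\frac{1}{j-1}<\frac{k}{N}$, the last inequality following from $\sum_{j=k+1}^{N}\frac{1}{j-1}<1$ for $k>n^{\ast}$. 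Substituting $v(k,p_{m+1,k,H})<\frac{k}{N}$ cancels a factor $k$ and reduces $(\star)$ to an elementary comparison of two partial sums of $\frac{1}{(k-1)(k-2)}$-type terms, which I would verify directly (terms with $k>2n$ on the left are negative and only help).

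The genuine obstacle is \emph{not} the bookkeeping of the induction but the inequality $(\star)$ itself, and its difficulty reflects the competing roles of the probability $p_{m,n,H}$: a larger $p$ raises the immediate accept-reward (the weight $b_k$, favouring the relative-rank-$1$ branch) but simultaneously lowers the total chance that any further candidate appears, and hence the weight $a_k$ placed on the continuation values. A naive term-by-term comparison of the two recursions therefore fails, and the monotonicity of $v(k,\cdot)$ together with the sharp upper bound $v(k,\cdot)\le z_k$ is precisely what is needed to dominate the redistributed weight. For this reason the lemma is genuinely a corollary of Theorem \ref{thma}: once that monotonicity is in hand, combining it with the feasibility of the counting strategy and the lower bound $p_{m,n,H}\ge\frac{m}{m+1}$ of Proposition \ref{prop1}(d) yields the result, with $(\star)$ as the only remaining computation.
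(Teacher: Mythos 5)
Your reduction is algebraically sound as far as it goes (the feasibility bound from (\ref{opt5p2}), the weight differences $a_k(p')-a_k(p)$ and $b_k(p)-b_k(p')$, and the use of $v(k,p)\le v(k,1)=z_k<k/N$ are all correct), but the argument is not complete: everything is made to rest on $(\star)$, which you do not actually prove. After inserting $v(k,\cdot)<k/N$ and telescoping, $(\star)$ becomes
\[
\frac{n-1}{n_{m+1}-2}+\frac{n-1}{N-1}+S_{1,n}\ \ge\ 2 ,
\]
and since each of the three terms is individually below $1$ in the relevant regime ($n\le n_{m+1}-2$, $n\ge n^{\ast}$), this is not "an elementary comparison of two partial sums": it requires quantitative lower bounds on $S_{1,n}$ and upper bounds on $n_{m+1}$ relative to $n$. (Asymptotically it reduces to $2+\ln t-t\le t/t_{m+1}$ on $[e^{-1},t_{m+1})$, which does hold, but only via the numerical facts $t_{m+1}\le t_2<\tfrac{1}{e-1}$.) So the pivotal step is asserted rather than established.

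More importantly, $(\star)$ is avoidable, and avoiding it is what makes the lemma a genuine corollary of Theorem \ref{thma} rather than a fresh computation. Insert the intermediate quantity $v\bigl(n,\tfrac{m}{m+1}\bigr)$ and split the claim as $v^a(n,m)\le v\bigl(n,\tfrac{m}{m+1}\bigr)\le v(n,p_{m,n,H})$. The second inequality is exactly Theorem \ref{thma} combined with Proposition \ref{prop1}(d). For the first, compare the recursion (\ref{app2}) with the feasible-strategy lower bound for $v\bigl(n,\tfrac{m}{m+1}\bigr)$ obtained from the \emph{same} threshold $n_{m+1}$: both expressions now carry identical arrival weights $a_k\bigl(\tfrac{m}{m+1}\bigr)$ and reward weights $b_k\bigl(\tfrac{m}{m+1}\bigr)$, because both condition on the current candidate having relative rank $1$ with probability $\tfrac{m}{m+1}$. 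The only discrepancy is inside the continuation term, where the true update gives $f\bigl(n,k,\tfrac{m}{m+1}\bigr)\ge\tfrac{m+1}{m+2}$ (Proposition \ref{prop1}(b),(d)), so the inductive hypothesis together with the monotonicity of $v(k,\cdot)$ finishes the step term by term, with no residual inequality. The competing effect you correctly identify -- a larger $p$ raises the stopping weight $b_k$ while lowering the continuation weight $a_k$ -- is real, but it is precisely the difficulty that the artificial-decision-point coupling in the appendix was constructed to absorb once and for all inside Theorem \ref{thma}; by comparing $v^a(n,m)$ directly to $v(n,p_{m,n,H})$ you re-expose that difficulty and are then obliged to resolve it a second time by hand.
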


The following example derives a near optimal strategy for Player 2 for $N=50$. The optimal response was 
derived in Example \ref{ex1}.

\begin{example} \label{ex2}
First, we derive the values of $K(n)$, from which we can derive $m_0$ and the $n_m$ for 
$m=1,2,\ldots m_0-1$. From Example 1, $n^{\ast}=19$. By direct calculation, we obtain
\begin{eqnarray*}
K(20) = \lceil S_{2,19+1}/(1-S_{1,19+1})\rceil & = & \lceil 8.9334 \rceil = 9 > 1 \\
K(21) = \lceil S_{2,19+2}/(1-S_{1,19+2})\rceil & = & \lceil 4.9930 \rceil = 5 > 2 \\
K(22) = \lceil S_{2,19+3}/(1-S_{1,19+3})\rceil & = & \lceil 3.4392 \rceil = 4 > 3 \\
K(23) = \lceil S_{2,19+4}/(1-S_{1,19+4})\rceil & = & \lceil 2.6040 \rceil = 3 \leq 4. 
\end{eqnarray*}
It follows that Player 2 should always accept the fourth candidate to appear after moment 
19. Continuing these calculations
\begin{eqnarray*}
K(24) = \lceil S_{2,24}/(1-S_{1,24})\rceil & = & \lceil 2.0801 \rceil = 3 \\
K(25) = \lceil S_{2,25}/(1-S_{1,25})\rceil & = & \lceil 1.7193 \rceil = 2 \\
K(28) = \lceil S_{2,28}/(1-S_{1,28})\rceil & = & \lceil 1.0891 \rceil = 2 \\
K(29) = \lceil S_{2,29}/(1-S_{1,29})\rceil & = & \lceil 0.9567 \rceil = 1. \\
\end{eqnarray*}
It follows that the first, second and third candidates to appear after moment 19 should be 
accepted when $n\geq 29$, $n\geq 25$ and $n\geq 23$, respectively.

Now we derive the approximate future expected reward for this near optimal strategy. Starting from the state $(n,3)$, where $n\geq n^{\ast}+3=22$, the next candidate to appear should always be accepted by Player 2. It follows from Equation (\ref{app1}) that
\begin{equation}
v^a (n,3) = \frac{3n}{200} \left[ \sum_{k=n+1}^{50} \frac{1}{k-1} \right] 
+ \frac{n(n-1)}{200} \left[ \sum_{k=n+1}^{50} \frac{1}{(k-1)(k-2)} \right] .
\label{exam21}
\end{equation}
These values can be calculated directly.

Analagously, starting from the state $(n,2)$, where $n\geq n_3 -1=22$, the next candidate  to appear should be accepted by Player 2. It follows that
\begin{equation}
v^a (n,2) = \frac{2n}{150} \left[ \sum_{k=n+1}^{50} \frac{1}{k-1} \right] 
+ \frac{n(n-1)}{150} \left[ \sum_{k=n+1}^{50} \frac{1}{(k-1)(k-2)} \right] .
\label{exam22}
\end{equation}
Again, these values can be calculated directly.

Starting from the state $(n,2)$, where $21=n^{\ast}+2\leq n<n_3 -1=22$ (i.e. $n=21$), 
from Equation (\ref{app2}), we obtain
\begin{equation}
v^a (21,2) = \frac{2v^a (22,3)}{33} + \frac{21}{75} \left[ \sum_{k=23}^{50} \frac{1}{k-1} 
\right] + \frac{7}{5} \left[ \sum_{k=23}^{50} \frac{1}{(k-1)(k-2)}\right] .\label{exam23}
\end{equation}
This can be calculated, since $v^a (22,3)$ has already been derived.

Similarly, starting from the state $(n,1)$, where $n\geq n_2 -1=24$, the next candidate  to appear should be accepted by Player 2. It follows that
\begin{equation}
v^a (n,1) = \frac{n}{100} \left[ \sum_{k=n+1}^{50} \frac{1}{k-1} \right] 
+ \frac{n(n-1)}{100} \left[ \sum_{k=n+1}^{50} \frac{1}{(k-1)(k-2)} \right] .
\label{exam24}
\end{equation}
These values can be calculated directly.

Starting from the state $(n,1)$, where $20=n^{\ast}+1\leq n<n_2 -1=24$, 
from Equation (\ref{app2}), we obtain
\begin{eqnarray}
v^a (n,1) & = & \frac{n}{2} \left[ \sum_{k=n+1}^{24} \frac{v^a (k,2)}{k(k-1)} 
\right] + n(n-1) \left[ \sum_{k=n+1}^{24} \frac{v^a (k,2)}{k(k-1)(k-2)} \right] + 
\nonumber \\
& & + \frac{n}{100} \left[ \sum_{k=25}^{50} \frac{1}{k-1} \right] + \frac{n(n-1)}{100}
\left[ \sum_{k=25}^{50} \frac{1}{(k-1)(k-2)} \right] . \label{exam25}
\end{eqnarray}
These values can be derived, since the appropriate values of $v^a (k,2)$ have 
already been calculated.

Starting from the state $(n,0)$, where $n\geq n_1 -1=28$, the next candidate  to appear should be accepted by Player 2. It follows that 
\begin{equation}
v^a (n,0) = \frac{n(n-1)}{100} \left[ \sum_{k=n+1}^{50} \frac{1}{(k-1)(k-2)} \right] .
\label{exam26}
\end{equation}
These values can be calculated directly.

For $19=n^{\ast}\leq n<n_1 -1 =28$, from Equation (\ref{app2}), we obtain
\begin{equation}
v^a (n,0) = 2n(n-1) \left[ \sum_{k=n+1}^{28} \frac{v^a (k,1)}{k(k-1)(k-2)} \right]
+ \frac{n(n-1)}{50} \left[ \sum_{k=29}^{50} \frac{1}{(k-1)(k-2)} \right] . \label{exam27}
\end{equation}
These values can be derived, since the appropriate values of $v^a (k,1)$ have 
already been calculated.

Considering the distribution of the moment at which Player 1 accepts a candidate [using the appropriate adaptation of Equation (\ref{exam6})], for $n\geq 18$, we obtain
\begin{equation}
v^a (n,-1) = \sum_{k=n+1}^{50} \frac{nv^a (k,0)}{k(k-1)} . \label{exam28}
\end{equation}
These values can be derived, since the appropriate values of $v^a (k,0)$ have 
already been calculated.

Thus all the necessary approximations of the future expected reward, $v^a (n,m)$, can be derived for $n\geq n^{\ast}-1$ and $m=-1,0,1,2,3$ 
using Equations (\ref{exam21})-(\ref{exam28}). Direct calculation leads to 
$v^a (18,-1)\approx 0.145868$. 

When $n< n^{\ast}$, Player 2 should accept a candidate if and only if $n\geq n_0^a$, where $n_0^a$ is the smallest value of $n$ satisfying $\frac{n_0^a}{N}\geq v^a (n,-1)$. 
For $n_0^a -1 \leq n<n^{\ast}$, using the appropriate adaptation of Equation
 (\ref{opt6p2}), we obtain
\begin{equation}
v^a (n,-1) = \frac{n}{50} \left[ \sum_{k=n+1}^{18} \frac{1}{k-1} \right] 
+ \frac{n}{18} v^a (18,-1). \label{exam26a} 
\end{equation}
Using a near optimal rule based on this approximate future expected reward leads to the condition that $n_0^a$ is the smallest integer value of $n$ to satisfy the following inequality
\begin{equation}
1 \geq \frac{Nv^a (n^{\ast}-1,-1)}{n^{\ast}-1} + \sum_{k=n+1}^{n^{\ast}-1} \frac{1}{k-1}. \label{bounda}
\end{equation}
It follows from direct calculations that Player 2 should accept a candidate when $11=n_0^a \leq n < n^{\ast}$ and $v^a (0,-1)=v^a (10,-1)\approx 0.203155$.
To summarise, the approximately optimal strategy derived in this way is as follows:
\begin{description}
\item[i)] accept a candidate if $11\leq n\leq 18$, 
\item[ii)] accept the first candidate seen after moment 19, if and only if $n\geq 29$,
\item[iii)] accept the second candidate seen after moment 19, if $n\geq 25$.
\item[iv)] accept the third candidate seen after moment 19, if and only if $n\geq 23$,
\item[v)] always accept the fourth candidate seen after moment 19.
\end{description}
\end{example}
Comparing this near-optimal strategy with the optimal strategy, the only difference occurs when the second candidate to appear after moment 19 appears at moment 24. The simplified rule always rejects such a candidate. The optimal strategy accepts such a candidate if and only if the first candidate was seen before moment 22. The approximate value of the near 
optimal rule differs from the optimal value in the sixth position after the decimal point. It should be noted that this approximation is a lower bound on the expected reward obtained using the near optimal solution, since the probability of a candidate having relative rank 1 is underestimated. Hence, using the simplified strategy Player 2 has an expected 
payoff of at most 0.001\% lower than the expected reward obtained under the optimal strategy. For the cases considered in Table \ref{tab1}, i.e. $3\leq N\leq 10$ and $N=20$), the simplified strategy was identical to the optimal strategy. 

This simplified strategy is considered in the following section, which gives an number of results regarding the asymptotic form of the optimal response of Player 2 and the value of the game to him when $N\rightarrow \infty$. 

\section{Asymptotics of the Optimal Response of Player 2} \label{sec5}

Suppose $N\rightarrow \infty$ and let $t$, referred to as the time, be the proportion of objects already observed. Gilbert and Mosteller (1966) showed that Player 1 should reject 
any object appearing before time $t^{\ast}=e^{-1}$ and accept the first candidate to 
appear after time $t^{\ast}$. The value of the game to Player 1 is $e^{-1}$. 

Now we consider the optimal response of Player 2. Analogously to the problem for finite $N$,
let $v(t,-1)$ denote the future expected reward of Player 2 when he is still searching at time $t$ and Player 1 has not yet accepted an object. Similarly, $v(t,0)$ denotes the future expected reward of Player 2  when Player 1 has already accepted an object, Player 2 is still searching, but has not observed a candidate since time $e^{-1}$. Finally, for $t\geq e^{-1}$, $v(t,p)$ denotes the future expected reward of Player 2 immediately after rejecting a  candidate when the probability that this candidate has relative rank 1 is $p$. The state at the 
corresponding decision point will be denoted by $(t,p)$.

Using an analogous argument to the one used for finite $N$, the set of the appearance times 
of the candidates observed by Player 2 after time $e^{-1}$ is a rich enough history to 
calculate the probability that the current candidate has relative rank 1. 
Let $p_m$ denote the probability that the $m$-th candidate seen by Player 2 after time $e^{-1}$ has relative rank 1 given the times at which such candidates have appeared, $(\mu_1 =s_1,\mu_2 =s_2, \ldots ,\mu_m =s_m)$. These probabilities can be calculated 
inductively by setting $p_1 =0.5$ and using 
\begin{equation}
p_{m+1} = \frac{s_{m+1} p_m + s_m (1-p_m)}{s_{m+1}p_m + 2s_m (1-p_m)} .
\label{asymp3}
\end{equation}
It should be noted that this equation can be obtained as the limit of Equation (\ref{induc1})
by setting $s_m = \frac{n}{N}$ and $s_{m+1} = \frac{k}{N}$. Using an argument analogous to the one used for finite $N$, it can be shown that for $m\geq 2$, $p_m > \frac{m}{m+1}$. 

Assume that $t\geq e^{-1}$ and Player 1 has not yet accepted a candidate. Considering the distribution of the time at which the next object of relative rank 1 appears, i.e. transitions between the states $(t,-1)$ and $(s,0)$, it follows that 
\begin{equation}
v(t,-1) = \int_t^1 \frac{tv(s,0)ds}{s^2}. \label{asymp1}
\end{equation}
After Player 1 has accepted an object, then the next object of relative rank 1 or 2 to appear is the first candidate to be observed by Player 2 after time $e^{-1}$. The probability that 
such an object has relative rank one is 1/2. 
Considering the distribution of the time at which such a candidate appears, it follows that 
\begin{equation}
v(t,0) = \int_t^1 \frac{2t^2v(s,1/2)ds}{s^3}. \label{asymp2}
\end{equation}

Define $w(t, p)$ to be the expected future reward gained by Player 2 immediately after he has rejected a candidate in state $(t,p)$ when he accepts the next candidate to appear. 
Using the extension of the law of total probability in conjunction with the 
expected future reward from such a strategy given the relative rank of the most recent candidate [see Equations (\ref{tm6}) and (\ref{yt})], we obtain
\begin{equation}
w(t,p) = pz(t)+(1-p)y(t)= \int_t^1 \left( \frac{pt}{s} + \frac{(1-p)t^2}{s^2} \right) ds =t[ (1-p)(1-t) -p\ln t].
\label{asymp4}
\end{equation} 
The probability that the current candidate has absolute rank 1 is $pt$. Hence,
using an OSLA rule, a candidate should be accepted in state $(t,p)$ if and only if 
\begin{equation}
p \geq q(t) = \frac{1-t}{2-t +\ln t}. \label{asymp5}
\end{equation}
Note that 
\[
q'(t) = -\frac{\ln t +1/t}{(2-t+\ln t)^2} <0  \mbox{ for } t\in [e^{-1},1] .
\]
It follows that if Player 2 should stop in state $(t,p_m)$ according to such an OSLA rule, then 
Player 2 should stop at the next decision point $(s,p_{m+1})$, since $p_{m+1}>p_m$ and 
$q(s)<q(t)$. Hence, this OSLA rule is the optimal strategy for $t\geq e^{-1}$. 

In particular, the first candidate to be seen by Player 2 after time $e^{-1}$ should be 
accepted if and only if
\begin{equation}
\frac{1}{2} \geq \frac{1-t}{2-t +\ln t} \Rightarrow t\geq -\ln t \Rightarrow t\geq e^{-t}. \label{asymp6}
\end{equation}
It follows that the first candidate to be observed by Player 2 after time $e^{-1}$ should be 
accepted if and only if $t\geq t_1$, where $t_1 \approx 0.567143$ is the solution of the equation $t=e^{-t}$. This solution may be found, for example, using value iteration.

However, derivation of the exact form of the optimal response and
value of such a game to Player 2 is highly complex, since it is necessary to integrate over the 
possible trajectories of the process describing the transitions of the state of Player 2. In the following two subsections, upper and lower bounds are derived for this value. 

\subsection{An upper bound on the value of the game to Player 2} \label{upper}

One bound can be found very simply by considering the value to Player 2 of the analogous game in 
which the players simultaneously observe each object and Player 1 always has priority. In such a case, Player 2 can compare the value of candidates that appear later with the value 
of the object chosen by Player 1. The value of such a game to Player 2 is $e^{-1.5}\approx 
0.223130$ (see Sakaguchi 1980). 

A tighter upper bound can be found by considering the following slight adaptation of the game. Suppose that after rejecting a candidate at time 
$t$, where $t\geq e^{-1}$, Player 2 is told the relative rank of this candidate (either one or two). Define $u(t, -1)$ to be the optimal future expected reward of Player 2 in such a game when Player 1 has not yet accepted a candidate. Let $u(t, 0)$ be the optimal future expected reward of Player 2 when Player 1 has accepted a candidate, but Player 2 has not observed a candidate since then. Finally, let $u(t, i)$, $i=1,2$, $t\geq e^{-1}$ be the 
optimal future expected reward of Player 2 immediately after rejecting a candidate of relative rank $i$ at time $t$. 

Given that the previous candidate had relative rank one, the next candidate to appear must also have relative rank one. Thus given this information, the problem faced by Player 2 reduces to the standard secretary problem. Hence, in this case Player 2 should accept the next candidate to appear and $u(t, 1)=-t\ln t$. It can be seen that 
$u(t, 0)=u(t, 2)$, since in both cases the next candidate to appear will have either relative rank one or two, each with a probability of 1/2. In this case, the next candidate should be 
accepted if and only if $t\geq t_1 \approx 0.567143$. It follows that for $t\geq t_1$
\begin{equation}
u(t, 2) = w(t, 0) = \frac{z(t)+y(t)}{2} = \frac{-t\ln t + t(1-t)}{2}. \label{upper1}
\end{equation}
For $e^{-1}\leq t< t_1$, by considering the distribution of the time at which the next 
object of relative rank 1 or 2 appears, we obtain
\begin{equation}
u(t, 2) = \int_t^{t_1} \frac{t^2 [u(s, 1)+u(s, 2)]ds}{s^3} + \int_{t_1}^1 \frac{t^2ds}{s^2}.  \label{upper2}
\end{equation}
Dividing this equation by $t^2$ and then differentiating, we obtain a differential equation for 
$u(t,2)$. From the boundary condition at $t=t_1$, obtained by substituting $t=t_1$ into 
Equation (\ref{upper1}), it follows that
\begin{equation}
u(t, 2) = c_1 t + \frac{t\ln ^2 t}{2},  \label{upper3}
\end{equation}
where $c_1 = 1-t_1-0.5t_1^2\approx 0.272031$. For $t\geq e^{-1}$, considering the distribution of the time at which Player 1 accepts an object, we obtain
\begin{equation}
u(t, -1) = \int_t^1 \frac{tu(s, 2)ds}{s^2} . \label{upper4}
\end{equation}
It follows by direct integration using the form of $u(t, 2)$ that 
\begin{equation}
u(t,-1) = \left\{ \begin{array}{cc} 
c_2 t -c_1 t\ln t - \frac{t\ln ^3 t}{6}, & e^{-1}\leq t<t_1 ,  \\
t[t-1-\ln t], & t\geq t_1 , \end{array} \right.\label{upper5}
\end{equation}
where $c_2 = t_1 +t_1^2+\frac{t_1^3}{3}-1 \approx -0.050398$.

Suppose that Player 2 accepts candidates in the interval $[t_0^u ,e^{-1})$. Since Player 2 
sees all the objects up to time $e^{-1}$, such a candidate must have relative rank one. 
Considering the distribution of the time at which the first object of relative rank one appears after time $t$, it follows that
\begin{equation}
u(t, -1) = \int_t^{e^{-1}} \frac{tds}{s} + \int_{e^{-1}}^1 \frac{u(s,2)ds}{s^2}.
\label{upper6}
\end{equation}
From the border condition at $t=e^{-1}$, this leads to $u(t, -1)=c_3 t-t\ln t$  for $t\in [t_0^u ,e^{-1}]$, where $c_3 = \frac{3t_1^2 +2_1^3-5}{6}\approx
-0.611700$. Player 2 should accept a candidate on this interval if and 
only if $t\geq t_0^u$, where $t_0^u$ satisfies the equation $t_0^u = u(t_0^u ,-1)$. This gives 
$t_0^u = \exp (c_3 -1)\approx 0.199548$. 

For $t<t_0^u$, $u_1 (t_0^u, -1)=t_0^u$. Hence, it can be seen that for the original game, the value 
of the game to Player 2, $v(0,-1)$, satisfies $v(0,-1)<t_0^u\approx 0.199548$. 

\subsection{A lower bound on the value of the game to Player 2} \label{lower}

As previously, we consider a near optimal strategy based on the number of candidates that 
Player 2 has seen since time $e^{-1}$. A lower bound on the probability that the 
$m$-th such candidate has relative rank 1 is given by $\frac{m}{m+1}$. Suppose Player 2
uses the appropriate near optimal strategy based on this approximation. The state of Player 2 on observing the $m$-th candidate after time $e^{-1}$ at time $t$ is defined to be 
$(t,m)$. Define $v^{\infty}(t, m)$ to be the corresponding approximation of the value of 
the game to Player 2 from future search immediately after rejecting a candidate in this state.
The index $\infty$ denotes the fact that Player 2 is assumed to have a perfect memory and 
knows exactly how many candidates have appeared since time $e^{-1}$.  The function
$v^{\infty}$ is an underestimate of the future expected reward of Player 2 (this can be seen by adapting the proof given in the appendix to the case where $N\rightarrow \infty$). Analogously, let $v^{\infty}(t,-1)$ be the expected future reward of Player 2 when neither 
player has accepted an object at time $t$. Additionally, for $t\geq e^{-1}$ define $v^{\infty}(t,0)$ to be the expected future reward of Player 2 when Player 1 has already accepted an object, but Player 2 has not seen a candidate since time $e^{-1}$. Finally, let 
$w^{a}(t,m)$ for $m\geq 0$, $t\geq e^{-1}$ be the approximation of the 
expected reward of Player 2 from accepting the next candidate to appear when starting from state $(t,m)$. Using the law of total probability,
\begin{equation} 
w^a (t,m) = \frac{mz(t)+y(t)}{m+1} = \frac{t(-m \ln t + 1-t)}{m+1}. \label{infapp1}
\end{equation}
Given that Player 2 uses the corresponding near optimal policy, he should accept a candidate in state $(t,m)$ 
if and only if $w^a (t,m) \leq \frac{mt}{m+1}$. It follows that Player 2 should accept a 
candidate in state $(t,m)$, if and only if $t\geq t_m$, where $t_m$ satisfies 
\begin{equation}
-\ln t_m = \frac{m-1+t_m}{m} \Rightarrow t_m = \exp \left[ \frac{1-m-t_m}{m}\right]. \label{boundcon1}
\end{equation}

 This equation can be solved using value iteration. Table \ref{tab3} gives the value of the
time thresholds $t_m$ to four decimal places for chosen values of $m$

\begin{table}[h]
\caption{Asymptotic time thresholds based on the number of candidates to appear}\label{tab3}%
\begin{tabular}{@{}llll@{}}
\toprule
$m$ & $t_m$ & $m$ & $t_m$ \\
\midrule
1 & 0.5671 & 20 & 0.3795 \\
2 & 0.4777 & 50 & 0.3725 \\
3 & 0.4429 & 100 & 0.3702 \\
4 & 0.4248 & 200 & 0.3690 \\
5 & 0.4137 & 500 & 0.3683 \\
6 & 0.4062 & 1 000 & 0.3681 \\
7 & 0.4008 & 10 000 & 0.3679 \\
8 & 0.3967 & 100 000 & 0.3679 \\
9 & 0.3935 & 1 000 000 & 0.3679 \\
10 & 0.3910 & $\infty$ & $e^{-1}\approx 0.3679$ \\
\botrule
\end{tabular}
\footnotetext{Source: Author's calculations using value iteration}
\end{table}
The case $m=\infty$ corresponds to the classical secretary problem, as in this case a 
candidate has relative rank 1 with probability 1. Note that for $t\geq t_{m+1}$,
$v^{\infty} (t,m)=w^a (t,m)$. Also, for $m\geq 0$ and $e^{-1}\leq t\leq t_{m+1}$, 
considering the distribution of the time at which the next candidate appears to Player 2, we obtain
\begin{equation}
v^{\infty} (t,m) = \int_{t}^{t_{m+1}} \frac{v^{\infty}(s,m+1) (mst+2t^2)ds}{s^3 (m+1)}
+ \int_{t_{m+1}}^1 \frac{(mst+t^2)ds}{s^2 (m+1)}. \label{lower1}
\end{equation} 
Similarly, for $t\geq e^{-1}$,
\begin{equation}
v^{\infty} (t,-1) = \int_t^1 \frac{tv^{\infty} (s,0)ds}{s^2}. \label{lower2}
\end{equation}
Suppose that Player 2 accepts a candidate on the interval $t_0^{\infty} \leq t<e^{-1}$. 
Considering the distribution of the time at which the next candidate appears, we obtain
\begin{equation}
v^{\infty} (t,-1) = \int_t^{e^{-1}} \frac{tds}{s} + \int_{e^{-1}}^1 \frac{tv^{\infty} (s,0)ds}{s^2} \label{lower3},
\end{equation}
where $t_0^{\infty}$ satisfies $t_0^{\infty} = v^{\infty}(t_0^{\infty})$. For $t\leq 
t_0^{\infty}$, $v^{\infty}(t,-1)=t_0^{\infty}$ and hence $t_0^{\infty}$ gives a lower 
bound on the value of the game to Player 2.

In order to find a reasonable approximation to this expected reward, it is assumed that Player 2 has a limited memory, namely that he can 
remember up to $k$ previous candidates, $k\in \{ 0,1,2,\ldots \}$. Under this assumption, the $m$-th candidate to appear to Player 2 after time $e^{-1}$ is treated as the 
$k+1$-th such candidate whenever $m\geq k+1$. The state at such a decision point is defined to be $(t,k+1)$. After rejecting such a candidate, Player 2
immediately forgets the first of the previous candidates. 

The corresponding approximations of the future expected rewards will be denoted by 
$v^k (t,m)$, $m=-1,0,1,\ldots ,k$. In this case, the estimate of the probability that a candidate has relative rank 1 is less than or equal to the corresponding estimate in the case where Player 2 has an unlimited (or simply larger) memory. It follows that 
$v^k (t,m)< v^{k+1} (t,m)<v^{\infty}(t,m)<v(t,p_{m,t,H})$ for $t\in [0,1]$ and $m\leq k$. 

Assume that for a given $k$ Player 2 uses the time thresholds $t_1,t_2,\ldots t_{k+1}$ (see 
Table \ref{tab3}) such that for $m\leq k$ the $m$-th candidate to appear to Player 2 after time $e^{-1}$ is accepted at time $t$ if and only if $t\geq t_m$ and for $m>k$ the 
$m$-th candidate to appear is accepted at time $t$ if and only if $t\geq t_{k+1}$.

Using an OSLA rule based on such an approximation of the future expected reward, when starting from state $(t,m)$, where $t\geq t_{m+1}$, then Player 2 will accept the next candidate to appear. Hence, for 
$t\geq t_{m+1}$ and $0\leq m\leq k$, $v^k (t,m) = w^a (t,m)$, where $w^a (t,m)$ is 
given by Equation (\ref{infapp1}). 

 Note that for $e^{-1}\leq t< t_{k+1}$, $v^k (t,k)=v^k (t_{k+1},k)$. This results from the facts that
Player 2 will reject any candidate appearing before time $t_{k+1}$ and the appearance of such a candidate will not change the state of Player 2's memory. It follows from Equation (\ref{infapp1})  that for $e^{-1}\leq t< t_{k+1}$
\[
v^k (t,k) = v^k (t_{k+1},k) = \frac{t_{k+1} (-k\ln t_{k+1} + 1 -t_{k+1})}{k+1}.
\]
From the boundary condition given by Equation (\ref{boundcon1}), we obtain
\begin{equation}
v^k (t,k) = \frac{(k^2 +k+1)t_{k+1} - t^2_{k+1}}{(k+1)^2}. \label{lower4}
\end{equation}
Hence, for $t\geq e^{-1}$, the value function $v^k (t,k)$ can be calculated directly. 

For $m=0,1,\ldots k-1$ and $e^{-1}\leq t<t_{m+1}$, the value function 
$v^k (t,m)$ satisfies the equation 
\begin{equation}
v^{k} (t,m) = \int_{t}^{t_{m+1}} \frac{v^{k}(s,m+1) (mst+2t^2)ds}{s^3 (m+1)}
+ \int_{t_{m+1}}^1 \frac{(mst+t^2)ds}{s^2 (m+1)}. \label{lower5}
\end{equation} 
Note that this equation is analogous to Equation (\ref{lower1}) with $v^{\infty} (t,m)$ 
and $v^{\infty} (t,m+1)$ being replaced by their approximations $v^{k} (t,m)$ 
and $v^{k} (t,m+1)$, respectively. Analogously, for $t\geq e^{-1}$, the value function $v^k (t,-1)$ satisfies the equation
\begin{equation}
v^k (t, -1) = \int_t^1 \frac{tv^k (s,0)ds}{s^2}. \label{lower6}
\end{equation}
Since the functional form of $v^k (s,0)$ changes, the simplest way to calculate this function for $t\in [t_{i+1},t_i )$, where $i=1,2,\ldots ,k$, is to condition on whether Player 1 observes (and thus accepts) a candidate before time 
$t_i$ or not. Let $A(t,t_i)$ be the event that Player 1 does not accept a candidate in the interval $(t,t_i )$. This is equivalent to the event that the best object to appear before time $t$ is the best object to appear 
before time $t_i$, thus $P[A(t,t_i)] = \frac{t}{t_i}$. Given that $A(t,t_i)$ occurs the future expected reward of Player 2 is $v^k (t_i ,-1)$. Using the law of total probability, we obtain that for  $t\in [t_{i+1},t_i )$
\begin{equation}
v^k (t, -1) = \frac{tv^k (t_i ,-1)}{t_i} + \int_t^{t_i} \frac{tv^k (s,0)ds}{s^2}. 
\label{lower7}
\end{equation}
It should be noted that for $t\geq t_1$, Equation (\ref{lower6}) can be interpreted as a particular case of Equation (\ref{lower7}), since by definition $v^k (1,-1)=0$. Also, a similar 
approach can be used to solve Equation (\ref{lower5}) when $e^{-1}\leq t<t_{m+2}$. This will be illustrated when deriving the value function $v^1 (t,0)$ for $e^{-1}\leq t<t_{2}$
in Example \ref{ex3}.

The resulting system of equations for $v^k (t,m)$, $m=-1,0,1,\ldots k-1$ and $t\geq e^{-1}$ can be solved recursively using the following procedure:

\begin{description}
\item[1.] The interval $[e^{-1},1]$ is split into intervals $[e^{-1},t_{k+1}), [t_{k+1},t_{k}), 
[t_k, t_{k-1}),\ldots ,[t_1 ,1]$. 
\item[2.] To initiate the recursion procedure, on the interval $[t_1 ,1]$, $v^k (k,m)=w^a (t,m)$, for $m=0,1,\ldots ,k$, where $w^a (t,m)$ is 
given by Equation (\ref{infapp1}). 
\item[3.] In step $i$, $1\leq i\leq k$, of the procedure, we derive the value functions, $v^k (t,m)$ on the interval $[t_{i+1},t_i]$ using the boundary conditions at $t_i$ resulting from the continuity of the value functions. For $m=i,i+1,\ldots k$, $v^k (k,m)=w^a (t,m)$. Hence, $v^k (k,i-1)$ can be derived directly from Equation (\ref{lower5}). The value 
functions $v_k (k,m), m=i-2,i-3,\ldots ,0$ can then be derived recursively using 
the same equation. 
\item[4.] In the $k+1$-th (final) step of the procedure, we calculate the value functions on the interval 
$[e^{-1},t_{k+1})$. On this interval, the function $v^k (t,k)$ is given by Equation (\ref{lower4}). The functions 
$v^k (t, m), m=k-1,k-2,\ldots ,0$ can be derived recursively using Equation (\ref{lower5}).
Finally, the value function $v^k (t,-1)$ can be derived recursively on the intervals $[t_1 ,1], [t_2 ,t_1], [t_3 ,t_2], \ldots , [t_{k+1},t_k], [e^{-1},t_{k+1})$ using 
Equation (\ref{lower7}).
\end{description}

Once this set of value functions has been derived for $t\geq e^{-1}$, since Player 1 rejects all objects appearing before time $e^{-1}$, it remains to derive $v^k (t,-1)$ for 
$t^k_0 \leq t<e^{-1}$, where $t^k_0$ is the earliest time at which Player 2 is willing to accept a candidate. Note that any candidate appearing to Player 2 before time $e^{-1}$ must have relative rank 1 and from the form of the near optimal strategy $t^k_0$ satisfies $t^k_0 = v^k (t^k_0,-1)$. Arguing as in the derivation of Equation (\ref{lower7}), we obtain
\begin{equation}
v^k (t,-1) = etv^k (e^{-1},-1)+\int_t^{e^{-1}} \frac{tds}{s}.  \label{lower8}
\end{equation}
For $0\leq t< t^k_0$, $v^k (t,-1)=v^k (t^k_0,-1)=t^k_0$.

\begin{example} \label{ex3}
Assume that $k=1$. When $t\geq t_1 \approx 0.5671$, the next candidate to appear to Player 2 is always accepted. From Equations (\ref{infapp1}) and (\ref{lower6}), it follows that
\begin{eqnarray}
v^1 (t,m) & = & w^a (t,m) = \frac{t(-m\ln t+1-t)}{m+1}, \hspace{.2in} m\in \{ 0,1\} \label{k11} \\
v^1 (t,-1) & = & \int_t^1 \frac{tv^1 (s,0)ds}{s^2} = t(t-1-\ln t) . \label{k12}
\end{eqnarray}

Now consider $t_2 \approx 0.4777 \leq t<t_1 \approx 0.5671$. If Player 2 has already observed a candidate after 
time $e^{-1}$, then the next candidate to appear will always be accepted. If Player 2 has not yet observed a 
candidate after time $e^{-1}$, then the next candidate to appear to Player 2 will only be accepted if it appears after time $t_1$. Hence, from Equations (\ref{infapp1}), (\ref{lower5}) and (\ref{lower7}), it follows that
\begin{eqnarray}
v^1 (t,1) & = & w^a (t,1) = \frac{t(-\ln t+1-t)}{2}, \label{k21} \\
v^1 (t,0) & = &  \int_{t}^{t_1} \frac{2t^2 v^{1}(s,1) ds}{s^3} 
+ \int_{t_1}^1 \frac{t^2 ds}{s^2}  \label{k22}, \\
v^1 (t,-1) & = & \frac{tv^1 (t_1 ,-1)}{t_1} + \int_t^{t_1} \frac{tv^1 (s,0)ds}{s^2} . \label{k23}
\end{eqnarray}
Using the expression for $v^1 (s,1)$ from Equation (\ref{k21}), Equation (\ref{k22}) leads to
\begin{equation}
v^1 (t,0) = t^2 \left( \ln t -1-\ln t_1 + \frac{1+\ln t_1}{t_1} \right) -t\ln t \label{k22part2}.
\end{equation}
From the boundary condition for $t_1$, $-\ln t_1 =t_1$, it follows that
\begin{equation}
v^1 (t,0) = t^2\ln t-t \ln t +c_1 t^2 \label{k22part3},
\end{equation}
where $c_1 = \frac{1}{t_1}+t_1-2\approx 0.330366$. It should be noted that the values 
of the constants $c_i$ given here are specific to this example.

Having derived $v^1 (t,0)$ for $t\in [t_2 ,t_1)$, we can now derive $v^1 (t,-1)$ on this interval using Equation (\ref{k23}), together with the boundary condition $-\ln t_1 = t_1$. 
This leads to the equation
\begin{equation}
v^1 (t,-1) = \frac{t \ln^2 t}{2}-t^2 \ln t +(1-c_1)t^2 +c_2 t \label{k23part2},
\end{equation}
where $c_2 = -t_1 - \frac{t_1^2}{2}\approx -0.727969$. 

For $t\in [e^{-1},t_2)$, from Equation (\ref{lower4}) it follows that 
\begin{equation}
v^1 (t,1) = c_3, \mbox{ where } c_3 = \frac{3t_2 - t_2^2}{4} \approx 0.301210 \label{k31}.
\end{equation} 
Starting in the state $(t,0)$ for $e^{-1}<t<t_2$, the next candidate
to appear to Player 2 is the next object of relative rank 1 or 2. No such candidate appears before time $t_2$ with probabilty $\frac{t^2}{t_2^2}$. Hence, using the 
law of total probability, we obtain
\begin{equation}
v^1 (t,0) = \int_t^{t_2} \frac{2t^2 c_3 ds}{s^3} + \frac{t^2 v^1 (t_2 ,0)}{t_2^2} \label{k32} .
\end{equation}
Using the boundary condition 
$-\ln t_2 = (1+t_2)/2$, this leads to 
\begin{equation}
v^1 (t,0) = c_3 +c_4 t^2, \label{k32part2}
\end{equation}
where $c_4 = \frac{1-2t_2}{4}-\frac{1}{4t_2}+c_1\approx -0.181843$.

To derive $v^1 (t,-1)$ for $e^{-1}<t<t_2$, from Equation (\ref{lower7}), we obtain
\begin{equation}
v^1 (t, -1) = \frac{tv^1 (t_2 ,-1)}{t_2} + \int_t^{t_2} \frac{tv^1 (s,0)ds}{s^2}. 
\label{k33}
\end{equation}
In conjunction with the boundary condition $-\ln t_2 = (1+t_2)/2$, this leads to 
\begin{equation}
v^1 (t,-1) = c_3 -c_4 t^2 +c_5 t, \label{k33part2}
\end{equation}
where $c_5 = c_2 + \frac{5(t_2^2-1)}{8} + t_2 (2-c_1 + c_4) \approx -0.499690$. 

Since we have derived all of the value functions for $t\geq e^{-1}$, it remains to derive 
$v^1 (t,-1)$ for $t^1_0 \leq t<e^{-1}$, where $t_0^1$ is the first time at which Player 2 is willing to accept a candidate.  From Equation (\ref{lower8}), we obtain
\begin{equation}
v^1 (t,-1) = etv^1 (e^{-1},-1)+\int_t^{e^{-1}} \frac{tds}{s}.  \label{k41}
\end{equation}
Together with the boundary condition $v^1 (e^{-1},-1)=c_3-c_4e^{-2}+c_5e^{-1}$, we obtain 
\begin{equation} 
v^1 (t,-1) = -t\ln t+c_6 t,  \label{k41part2}
\end{equation}
where $c_6 = c_5 -1-c_4e^{-1}+ec_3 \approx -0.614019$. 

Based on the corresponding near optimal strategy, Player 2 should accept a candidate as long as 
$v^1 (t,-1) \geq t$. It follows that Player 2 becomes willing to accept a candidate at time 
$t^1_0$, where $t^1_0$ satisfies
\begin{equation}
t^1_0 = -t^1_0 \ln t^1_0 + c_6 t^1_0 \Rightarrow \ln t^1_0 = -1+c_6 \Rightarrow t^1_0 = e^{-1+c_6} \approx 0.199086.
\end{equation}

For $0\leq t< t^1_0$, $v^1 (t,-1) = e^{-1+c_6}$. It should be noted that this is a lower 
bound on the value of the game to Player 2.
\end{example}

A general upper bound on the underestimation of the value function due purely to the 
limited memory of Player 2 (and not the underestimation of the probability of a candidate having relative rank 1 due to just counting the number of candidates after time $e^{-1}$) can be derived as follows: Suppose Player A and Player B both play the role of Player 2 in 
two identical realisations of the game. It is assumed that Player A can count exactly the number of candidates that have appeared after time $e^{-1}$, while Player $B$ can count up to $k$. That is to say, Player A uses the near optimal rule based on the time thresholds 
$\{ t_m \}_{m=1}^{\infty}$ and Player B uses the near optimal rule based on the time thresholds 
$\{ t_m \}_{m=1}^{k+1}$. It should be noted that the actions of the players can only differ on the interval $(e^{-1},t_{k+1})$ when at least $k+2$ candidates according to Player 2 appear in this interval. If the players' decisions differ, then Player $A$ accepts a candidate that Player $B$ rejects. Since every such candidate has a relative rank $\leq 2$, the probability of such an event occurring is bounded above by the probability of $k+2$ objects of rank 1 or 2 appearing in the interval $(e^{-1},t_{k+1})$. The number of objects of relative rank 1 or 2 in this interval has a Poisson distribution with parameter $\lambda_k = 2(1+\ln t_{k+1})$ 
(see Resnick, 2008). From the boundary condition for $t_{k+1}$, see Equation (\ref{boundcon1}), it follows that $\lambda_k = \frac{2(1-t_{k+1})}{k+1}$.  Suppose that both players are still searching at time $e^{-1}$. In this case, let $C$ be the event that Player A accepts a candidate that Player B rejects.  It follows that
\begin{equation}
P(C)  < 1- \sum_{i=0}^{k+1} \frac{\exp (-\lambda_k )[\lambda_k]^i}{i!}. \label{bound1}
\end{equation}
Given that $C$ occurs, the expected reward of Player A is bounded above by $t_{k+1}$ and the 
expected reward of Player B is bounded below by $v_k^k (t_{k+1})$, where from Equation
 (\ref{lower4})
\begin{equation}
v_k^k (t_{k+1}) = \frac{(k^2+k+1)t_{k+1}-t_{k+1}^2}{(k+1)^2}. \label{bound2}
\end{equation}
It follows that 
\begin{equation}
0<v^{\infty}_{-1} (e^{-1})-v^k_{-1} (e^{-1}) \leq \frac{kt_{k+1}+t_{k+1}^2}{(k+1)^2}
\left( 1- \sum_{i=0}^{k+1} \frac{\exp (-\lambda_k )[\lambda_k]^i}{i!}\right).
\label{bound3}
\end{equation}
Suppose that Player A accepts candidates on the interval $(t_0^{\infty},e^{-1})$ and 
Player B accepts candidates on the interval $(t_0^{k},e^{-1})$. It follows from the definition of these strategies that $t_0^k < t_0^{\infty}$. Solving Equation (\ref{lower8}), we obtain
that for $t\in (t_0^{\infty},e^{-1})$ 
\begin{equation}
v_{-1}^k (t) = -t\ln t + c_7 t; \hspace{.1in} v_{-1}^{\infty} (t) = -t\ln t + c_8 t.
\label{bound4}
\end{equation}
From the boundary conditions at $t=e^{-1}$, it follows that $c_8 >c_7$, since 
$v^{\infty}_{-1} (e^{-1})>v^k_{-1} (e^{-1})$. Hence, 
$v_{-1}^{\infty}(t)-v_{-1}^k (t)$ is increasing in $t$ on the interval $t\in (t_0^{\infty},e^{-1})$. 

Finally, $v^{\infty}_{-1}(t)$ is constant on the interval $(0,t_0^{\infty})$, while 
$v^k_{-1}(t)$ is decreasing on the interval $(t_0^k ,t_0^{\infty})$ and 
constant on the interval $(0,t_0^k)$. Thus
\begin{equation}
v^{\infty}_{-1}(0)\! -\! v^k_{-1}(0) \! \leq \! v^{\infty}_{-1}(e^{-1})\! -\! v^k_{-1}(e^{-1}) 
\! \leq \! \frac{kt_{k+1}+t_{k+1}^2}{(k+1)^2} \!
\left( 1 \! - \! \sum_{i=0}^{k+1} \frac{\exp (-\lambda_k )[\lambda_k]^i}{i!}\right). 
\label{bound5}
\end{equation}
The table below gives upper bounds on the fall in the value of the game to Player 2 caused by restricting his memory to $k$ for various $k$. The goal was to obtain a lower bound, 
$v_0^k (0)$, which is within approximately $10^{-6}$ of $v_{-1}^{\infty}(0)$. From 
Table \ref{tabbound} it appears that $k=3$ is sufficient for such accuracy. The appropriate lower bounds for the value of the game to Player 2 are also given. These were calculated 
by estimating $v^k (0,-1)$ (as illustrated in Example 
\ref{ex3}) with the aid of the Mathematica package (Wolfram, 2003) for $k=0,1,2,3$.
It follows that the value of the game to Player 2 lies in the interval $(v^3 (0,-1), u(0,-1))
\approx (0.199217,0.199548)$. 

\begin{table}[h]
\caption{Lower bounds on the value of the game to Player 2}\label{tabbound}%
\begin{tabular}{@{}lll@{}}
\toprule
$k$ & Upper bound on $v^{\infty}(0,-1)-v^k (0,-1)$ & Lower bound on $v(0,-1)$ [$v^k (0,-1)$] \\
\midrule
0 & $6.915\times 10^{-2}$ & 0.195684 \\
1 & $2.848\times 10^{-3}$ & 0.199086 \\
2 & $7.093\times 10^{-5}$ & 0.199214 \\
3 & $1.714\times 10^{-6}$ & 0.199217 \\
\botrule
\end{tabular}
\footnotetext{Source: Author's calculations with the aid of the Mathematica package}
\end{table}

\section{Conclusion} \label{sec6} 

This article has considered a variant of the best choice (secretary) problem that arises from 
a two-player game in which the players observe a sequence of objects asychronously and Player one always has priority. The problem faced by Player 2 is made difficult to solve by the fact that when Player 1 selects an object before Player 2 does, then Player 2 cannot 
compare the value of later objects with the object chosen by Player 1. Hence, there is 
uncertainty regarding the relative rank of such objects. Any such object that is the best seen so far by Player 2 (a candidate) must have a relative rank of at most two. In the best choice problem considered here, at any decision point Player 2 can infer whether an optimally behaving Player 1 has accepted an object or not. The 
probability that such a candidate has relative rank one (i.e. is the best of all the objects to appear so far) can be derived based on a form of Bayesian updating. These calculations are 
based on the appearance times of candidates (according to Player 2) that appear after the moment when an optimally behaving Player 1 is first prepared to accept an object of relative rank one. Each succesive candidate is more likely to have relative rank one than the previous candidate.

 It was shown that, after Player 1 has accepted an object, the optimal actions of Player 2 are based on a one-step look ahead rule. In this case, Player 2 
should accept a candidate if and only if the expected reward from accepting that candidate is greater than the expected reward from accepting the next candidate to appear. Under the optimal response, if Player 1 has already accepted an object, then Player 2 accepts a candidate at moment $n$ when the probability of such a candidate having relative rank 1 is 
at least $q_n$, where $q_n$ is decreasing in $n$. Unless $N$ is very large, such a strategy is relatively straightforward to implement step by step using a Bayesian approach.  However, since the probability of a candidate having relative rank one depends on the history of the process, it is difficult to describe the optimal strategy in \textit{a priori} form unless the number of objects in the sequence, $N$, is relatively small. The optimal strategy is described in \textit{a priori} form for the case $N=50$. A near-optimal strategy based on just the number of candidates that 
have appeared after Player 1 has accepted an object  (but not their appearance times) was discussed. This strategy is 
compared to the optimal strategy in the case $N=50$. 

Results were derived on the asymptotic value of the game to Player 2 when $N\rightarrow \infty$. An upper bound on this value can be derived by considering a problem in which Player 2 is given additional information on the relative rank of a candidate that has been rejected. A lower bound on this expected reward can be derived by considering a near optimal strategy based on the number of candidates to appear after Player 1 has selected an object. These bounds are relatively narrow. The optimal response of Player 2 gives an expected reward in the interval (0.199217, 0.199548). In addition, calculations indicate that only very 
minimal gains can be made by remembering more than three previous candidates.

One possible route for future research would be to consider best choice problems in which there are various forms of uncertainty regarding the relative rank of a current observation. Although Bayesian updating is relatively simple to implement in the problem considered here, it is possible that combinatorical approaches could also be employed to solve such problems.

Another possible route for future research would be to find better estimates of the asymptotic value of the game to Player 2. One possible approach to this problem would be to consider an analogous game in which objects appear according to a Poisson process. One might also consider strategies that take into account the appearance times of candidates after a proportion $e^{-1}$ of the objects have been observed, as well as the number of such candidates. 

Finally, future research might consider an analogous game with a larger number of players. However, the increase in the technical difficulty of finding a solution to the two-player game compared to the classical secretary problem suggests that solving the corresponding three-player game would be much more difficult, even if it suffices just to derive/estimate the optimal response of Player 3.

\bmhead{Acknowledgments}

The author thanks Prof. Micha\l \hspace{.02in} Morayne and Prof. Krzysztof Szajowski for help and conversions leading to this paper.

\bmhead{Funding} 

The author did not receive any financial funding in the writing of this manuscript.

\bmhead{Statement regarding Conflicts of Interest} 

The authors have no competing interests to declare that are relevant to the content of this article.

\bmhead{Statement regarding Data Availability} 

No datasets were generated or analyzed during the writing of this article.

\begin{appendices}

\section{ \hspace{-1.3em}: Proof of Theorem 1}\label{secA1}
 
Consider two states $(n,p)$ and $(n,r)$ where $n^{\ast}\leq n<N$ and $p>r\geq 0$. Assume that when starting from the state $(n,r)$ not more than $j$ future candidates should be rejected, where $j\geq 0$. It thus suffices to show by induction that starting in state $v(n,p)$ not more than $j$ future candidates should be rejected and $v(n,p)>v(n,r)$ for all $j\geq 0$. 

Suppose that $j=0$, i.e. starting from state $(n,r)$, the next candidate should always be accepted. It follows from the form of the optimal policy that it is always optimal to accept the next candidate when starting from state $(n,p)$. Hence, from Equation (\ref{opt4p2}) 
\begin{eqnarray*}
v(n,p)-v(n,r) & = & \frac{n(p-r)}{N} \sum_{k=n+1}^N \left[ \frac{1}{k-1}- \frac{n-1}{(k-1)(k-2)} \right] \\
& = &  \frac{n(p-r)}{N} \sum_{k=n+1}^N \frac{(k-2)-(n-1)}{(k-1)(k-2)} \geq 0.
\end{eqnarray*}

Now assume that the theorem holds for $j=i$. Assume that starting from state 
$(n,r)$ a maximum of $i+1$ candidates should be rejected. Also, when the next candidate to
appear is observed at moment $k$, then it should be accepted if and only if $k\geq k_0$. 
From Equation (\ref{opt5p2}) we have 
\begin{eqnarray}
v(n,r) & = & rn\left[\sum_{k=n+1}^{k_0-1} \frac{v(k,\tilde{q}[r;n,k])}{k(k-1)}\right] +2(1-r)n(n-1) \left[ \sum_{k=n+1}^{k_0-1} \frac{v(k,\tilde{q}[r;n,k])}{k(k-1)(k-2)} \right] + \ldots \nonumber \\
& & +\frac{rn}{N}\left[\sum_{k=k_0}^{N} \frac{1}{k-1}\right] +\frac{(1-r)n(n-1)}{N} \left[\sum_{k=k_0}^{N} \frac{1}{(k-1)(k-2)} \right] \label{ap1}, 
\end{eqnarray}
where $\tilde{q}[r;n,k]$ is the probability that the next candidate to appear has relative rank 1 given that it appears at time $k$ and the current candidate has relative rank 1 with probability $r$.  It should be noted that using Lemma \ref{lemma1} Equation (\ref{ap1}) can be written in the form 
\begin{eqnarray}
v(n,r) & = & rn\left[\sum_{k=n+1}^{k_0-1} \frac{v(k,\tilde{q}[r;n,k])}{k(k-1)}\right] +2(1-r)n(n-1) \left[ \sum_{k=n+1}^{k_0-1} \frac{v(k,\tilde{q}[r;n,k])}{k(k-1)(k-2)} \right] + \ldots \nonumber \\
& & + P(T> k_0-1)w(k_0-1,\tilde{s}[r;n,k_0]) \label{ap2}, 
\end{eqnarray}
where $T$ is the time at which the next candidate appears, $w(k_0-1,p)$ is the 
expected reward from accepting the next candidate to appear starting from the 
state $(k_0-1,p)$ and $\tilde{s}[r;n,k_0]$ is the probability that, when starting from the state 
$(n,r)$, the most recent candidate to appear has relative rank 1 given that no candidate appears between moment $n$ and moment $k_0$.

From Equation (\ref{induc1}), it follows that
\[
\tilde{q}[r;n,k] = \frac{r(k-2)+(1-r)(n-1)}{r(k-2)+2(1-r)(n-1)}.
\]
Note that from Equation (\ref{totprob}) the probability that the next candidate appears 
at time $k$ is given by $t(r;n,k)$, where
\[
t(r;n,k) = \frac{rn}{k(k-1)}+  \frac{2(1-r)n(n-1)}{k(k-1)(k-2)}.
\] 

Now consider the problem faced by Player 2 when starting from state $(n,p)$. We can obtain a lower bound on $v(n,p)$ by assuming that when the next candidate to
appear is observed at moment $k$, then it is accepted if and only if $k\geq k_0$. The probability that the next candidate appears 
at time $k$ is given by $t(p;n,k)$. Note that 
\[
t(r;n,k)-t(p;n,k) = \frac{(r-p)n}{k(k-1)}+  \frac{2(p-r)n(n-1)}{k(k-1)(k-2)} =
 \frac{n(p-r)[2(n-1)-(k-2)]}{k(k-1)(k-2)}. 
\]
Hence, for $k\leq 2n$, $t(r;n,k)-t(p;n,k) \geq 0$. 

Now suppose that $k>2n\geq 2n^{\ast}$. The probabilty that any candidate appearing at or after moment $2n^{\ast}$ has absolute rank 1 is $\geq \frac{1}{2}\times \frac{2n^{\ast}}{N}=\frac{n^{\ast}}{N}$. Since this is 
greater than the corresponding expected reward of a searcher in the standard secretary problem, Player 2 should always accept such a candidate.

Starting from state $(n,p)$, we assume that given a decision point has not occurred since moment $n$, then an artificial decision point occurs at moment $k$, $n<k<k_0$, with probability $t(r;n,k)-t(p;n,k)$. This is done to make the distribution of the next decision point given that a candidate is not accepted independent of the parameter $p$. Let $D$ be the event that such a decision point is a real decision point and $\tilde{q}(p;n,k)$ denote the probability that at moment $k$ the most recent candidate to appear has relative rank 1. Note that if the decision point at $k$ is real, then this is the 
probability that the current candidate has relative rank 1. If the decision point at $k$ is 
artificial, then the most recent candidate appeared at a previous time. However, the 
distribution of the time until the next candidate appears (i.e. decision point) only depends on the probability that 
the most recent candidate has relative rank one given the history of the process. For 
the purposes of this transformation of the problem, we may assume that the history of the 
process is given by the set of decision points (both real and artificial). Also, from the 
definition of an artifical decision point, Player 2 should never accept an object at such a moment. Let $T$ denote the moment at which a decision point (artificial or real) occurs. 
Starting from the state $(n,p)$, at the next decision point, the probability that the most recent candidate to appear has relative rank 1 is given by 
\[
\tilde{q}[p;n,k] \! = \! \frac{P (R_n \! = \! 1,D^c,T \! = \! k) \! + \! \sum_{i=1}^2 \! P(R_k \! = \! 1|D, R_n  \! = \! i, T  \! = \!  k) P(D, T  \! = \!  k, R_n \! = \! i)}{P(T= k)} 
\]
Note that $P(R_k =1|D,R_n =1, T=k)=1$, $P(R_k =1|D,R_n=2,T=k)=0.5$, $P(T=k)=t(r;n,k)$. In addition, 
\[
P(D,T=k,R_n =1) = \frac{pn}{k(k-1)}; \hspace{.2in} P(D,T=k,R_n =2) = \frac{2(1-p)n(n-1)}{k(k-1)(k-2)} .
\]
Thus
\begin{eqnarray*}
\tilde{q}[p;n,k] & > & \left(  \frac{pn}{k(k-1)}\times \frac{1}{t(r;n,k)}+\frac{2(1-p)n(n-1)}{k(k-1)(k-2)}\times \frac{1}{2t(r;n,k)} \right) \\
&  = & \frac{p(k-2)+(n-1)(1-p)}{r(k-2)+2(n-1)(1-r)}.
\end{eqnarray*}

Hence, for $k<k_0$
\[
\tilde{q}[p;n,k] -\tilde{q}[r;n,k] > \frac{(p-r)(k-n-1)}{r(k-2)+2(n-1)(1-r)} \geq 0. 
\]
Analogously to Equation (\ref{ap2}), we obtain 
\begin{eqnarray}
v(n,p) & > & rn\left[\sum_{k=n+1}^{k_0-1} \frac{v(k,\tilde{q}[p;n,k])}{k(k-1)}\right] +2(1-r)n(n-1) \left[ \sum_{k=n+1}^{k_0-1} \frac{v(k,\tilde{q}[p;n,k])}{k(k-1)(k-2)} \right] + \ldots \nonumber \\
& & + P(T> k_0-1)w(k_0-1,\tilde{s}[p;n,k_0]), \label{a3}
\end{eqnarray}

Note that when staring from state $(n,r)$, after rejecting a candidate at the next decision point $(k,\tilde{q}[r;n,k])$, where $k<k_0$, by assumption a maximum of $i$ candidates should be rejected. From the induction assumption, when staring from state $(n,p)$, after rejecting a possible candidate at the next (real or artificial) decision point $(k,\tilde{q}[p;n,k])$, where $k<k_0$, by assumption a maximum of $i$ candidates should be rejected, since 
$\tilde{q}[p;n,k]>\tilde{q}[r;n,k]$. It follows that starting from state $(n,p)$ a maximum of 
$i+1$ candidates should be rejected.

From the induction assumption it follows that the expressions in the two sums in Equation (\ref{a3}) are greater than the corresponding expressions in Equation (\ref{ap2}). Since 
$w(n,p)$ is increasing in $p$, it thus suffices to show that $\tilde{s}[p;n,k_0]>\tilde{s}[r;n,k_0]$.

Consider a process starting from the state $(n,r)$. Given that the most recent candidate 
has relative rank 1, from Equation (\ref{tm2}) the probability that no candidate appears before moment $k_0$ is given 
by $\frac{n}{k_0 -1}$. Given that the most recent candidate 
has relative rank 2, from Equation (\ref{tm9}) the probability that no candidate appears before moment $k_0$ is given 
by $\frac{n(n-1)}{(k_0 -1)(k_0-2)}$. Let $p\geq r$. According to the modified model, starting from state $(n,p)$ the 
probability that no decision point (artificial or real) occurs before moment $k_0$ is given by
$P(N)$, where
\[
P(N)=\frac{rn}{k_0-1} + \frac{(1-r)n(n-1)}{(k_0-1)(k_0-2)} .
\]
It should be noted that when $p>r$ the definition of the probability of an artificial decision point occurring at a given moment is not conditional on the relative rank of the most recent 
candidate. Given that the most recent candidate has relative rank 1, the probability that 
no decision point (artificial or real) occurs before moment $k_0$ is given by 
\[
P(N|R_n=1)=1-\sum_{k=n+1}^{k_0-1} \left( \frac{n}{k(k-1)} + \frac{n(p-r)[2(n-1)-(k-2)]}{k(k-1)(k-2)} \right).
\]
Note that here the first term in the sum is the probability that the $k$-th object is a candidate 
given that the relative rank of the previous candidate is 1 and the second term is the probability of an artificial decision point occurring at moment $k$. It follows that
\[
P(N|R_n=1)= \frac{(1-p+r)n}{k_0-1} + \frac{(p-r)n(n-1)}{(k_0-1)(k_0-2)}.
\]
Analogously, given that the most recent candidate has relative rank 2, the probability that 
no decision point (artificial or real) occurs before moment $k_0$ is given by 
\begin{eqnarray*}
P(N|R_n=2) & = & 1-\sum_{k=n+1}^{k_0-1} \left( \frac{2n(n-1)}{k(k-1)(k-2)} + \frac{n(p-r)[2(n-1)-(k-2)]}{k(k-1)(k-2)} \right) \\
& = &  \frac{(1+p-r)n(n-1)}{(k_0-1)(k_0-2)} -\frac{(p-r)n}{k_0-1}.
\end{eqnarray*}
Using Bayes' rule, the probability that the most recent candidate has relative rank 1 given that no candidate appears between moment $n$ and moment $k_0$ is given by 
\[
\tilde{s}[p;n,k_0] \! = \! P(R_n \! =\! 1|N) \!  = \! \frac{P(N|R_n \! =\! 1)P(R_n \! = \! 1)}{P(N)} \! = \! \frac{p[(1\! -\! p\! +\! r)(k_0\! -\! 2)\! +\! (p\! -\! r)(n\! -\! 1)]}{r(k_0-2)+(1-r)(n-1)}.
\]
Differentiating this with respect to $p$, we obtain
\[
\frac{d\tilde{s}[p;n,k_0]}{dp} = \frac{k_0-2+(2p-r)[(n-1)-(k_0 -2)]}{r(k_0 -2)+(1-r)(n-1)}.
\]
The denominator of this expression is positive and independent of $p$. The numerator is 
a linear function of $p$ and is defined for $r\leq p\leq 1$. When $p=r$, the numerator is 
given by 
\[
(1-p)(k_0-2)+p(n-1)>0.
\]
When $p=1$, the numerator is 
given by 
\[
2(n-1)-(k_0-2)+r[(k_0-2)-(n-1)].
\]
Note that this expression is bounded below by 0, since $n-1\leq k_0-2 \leq 2(n-1)$. It follows that $\tilde{s}[p;n,k_0] $ is increasing in $p$.
 It thus follows from the induction hypothesis that $v(n,p)>v(n,r)$ for $p>r\geq 0$.

Finally, we prove that for $n^{\ast}\leq n<N$, $v(n,-1)< v(n,0)$. This is proven using 
recursion on $n$. Note that $v(N-1,-1)=0<v(N-1,0)=\frac{1}{N}$. Now assume that 
$v(n+1,-1)<v(n+1,0)$. Conditioning on whether the $(n+1)$-th object has relative rank 1 or not
\[
v(n,-1)= \frac{nv(n+1,-1)}{n+1}+\frac{v(n+1,0)}{n+1}.
\]
Similarly, conditioning on whether the $(n+1)$-th object has relative rank at most 2 or not
\[
v(n,0)=\frac{(n-1)v(n+1,0)}{n+1}+\frac{2}{n+1} \max\left\{ v(n+1,0.5),\frac{n+1}{2N}\right\} 
\geq v(n+1,0).
\] 
It follows that 
\begin{eqnarray*}
v(n,0)-v(n,-1) & \geq & v(n+1,0)-\frac{nv(n+1,-1)}{n+1}-\frac{v(n+1,0)}{n+1} \\
& \geq & \frac{n}{n+1} [v(n+1,0)-v(n+1,-1)] >0.
\end{eqnarray*}
The result then follows by recursion.

\end{appendices}

\end{document}